\theoremstyle{plain}
\newtheorem{theorem}{Theorem}[section]
\newtheorem{lemma}[theorem]{Lemma}
\newtheorem{prop}[theorem]{Proposition}
\theoremstyle{definition}
\numberwithin{equation}{section}
\newcommand{\R}{\mathbb R}
\newcommand{\abs}[1]{\left|#1\right|}
\begin{document}




\title[A Penrose-type inequality with angular momenta]{A Penrose-type inequality with angular momenta for black holes with 3-sphere horizon topology }

\author[Alaee]{Aghil Alaee}
\address{\parbox{\linewidth}{Aghil Alaee\\
		Department of Mathematics, Clark University, Worcester, MA 01610, USA\\
		Center of Mathematical Sciences and Applications, Harvard University,
		Cambridge, MA 02138, USA}}
\email{aalaeekhangha@clarku.edu, aghil.alaee@cmsa.fas.harvard.edu}

\author[Kunduri]{Hari Kunduri}
\address{\parbox{\linewidth}{Hari Kunduri\\
		Department of Mathematics and Statistics, Memorial University of Newfoundland, St John's NL A1C 4P5, Canada}}
\email{hkkunduri@mun.ca}


\begin{abstract}We establish a Penrose-type inequality with angular momentum for four dimensional, biaxially symmetric, maximal, asymptotically flat initial data sets $(M,g,k)$ for the Einstein equations with fixed angular momenta and horizon inner boundary associated to a 3-sphere outermost minimal surface.  Moreover, equality holds if and only if the initial data set is isometric to a canonical time slice of a stationary Myers-Perry black hole.
\end{abstract}
\maketitle

\section{introduction}
Inequalities placing a lower bound on the total mass $m$ of a $(n+1)$-dimensional spacetime containing a black hole in terms of the area $A$ of a spatial cross section of the event horizon with $(n-1)$-spherical topology, of the form
\begin{equation}\label{Penrose}
m \geq \frac{1}{2}\left(\frac{A}{\omega_{n-1}}\right)^{\frac{n-2}{n-1}} 
\end{equation} where $\omega_{n-1}$ is the area of canonical $(n-1)$-sphere, are collectively referred to as Penrose inequalities.  Penrose produced a heuristic argument for \eqref{Penrose} based upon the standard picture of gravitational collapse.  The weak cosmic censorship conjecture and final state conjecture play an essential role in this argument, and a counterxample of \eqref{Penrose} would provide strong evidence that at least one of these conjectures is false or incomplete in some way.  A rigorous formulation of the problem, logically independent from these conjectures, is itself a difficult task.  In particular, it is desirable to recast the inequality in terms of quantities which can be determined purely in terms of initial data so that knowledge of the global evolution is not required.  Furthermore, the area of the event horizon, which requires global properties of the spacetime, should be replaced with a suitable proxy that can determined in terms of the quasi-local geometry of the data.  Significant progress in this direction was carried out in the celebrated works of Huisken-Ilmanen~\cite{HI} and shortly thereafter by Bray~\cite{Bray} by using a different approach suited for a more general setting. These authors proved what is known as the Riemannian Penrose inequality, namely \eqref{Penrose} for 3-dimensional asymptotically flat (Euclidean) Riemannian manifolds with non-negative  scalar curvature  with boundary consisting of an outermost minimal surface. Here $m$ is taken to the ADM mass of the initial data, and $A$ the area of this minimal surface, and rigidity is achieved if and only if the initial data is that of a canonical slice of the exterior Schwarzschild black hole spacetime.

Refinements to the Penrose inequality have been established rigorously when the spacetime carries additional conserved charges, such as electric charge~\cite{Khuri:2014wqa}.  Very recently a Penrose type inequality involving angular momentum $J$, was established by Khuri-Sokolowsky-Weinstein for 3-dimensional, asymptotically flat, axisymmetric, maximal initial data sets with outermost minimal boundary~\cite{Khuri:2019pag}. The axial symmetry prevents gravitational waves from carrying angular momenta away to infinity, and hence the angular momentum of the initial and final configurations should be the same.  Thus a heuristic argument suggests a lower bound for the ADM mass in terms of the area $A$ of an outermost, minimal surface and the angular momenta of the initial data, with \eqref{Penrose} recovered in the limit $J=0$.  

The above-mentioned authors rigorously proved a result which is closely related to the desired result~\cite[Theorem 1.1]{Khuri:2019pag}. Namely, they produced a lower bound for the mass in terms of the angular momentum, area of the outermost minimal surface, and an extra term involving a certain integral over the horizon of a quantity associated to the axisymmetric initial data.  The rigidity case is the unique Kerr(-Newmann) black hole with the given $A$ and $J$. The key observation used in the proof is that for 3-dimensional, asymptotically flat, axisymmetric, maximal initial data sets, the ADM mass can be bounded below by a functional which itself can be regarded as a renormalization of the energy of harmonic maps from an auxiliary Euclidean $\mathbb{R}^3\setminus \Gamma$ where $\Gamma$ can be taken to be the $z-$axis.  The geometry of the horizon is encoded in the asymptotic behaviour of the map near the axis.  Such an approach has proved integral in the proof of mass-angular momentum inequalities \cite{Dain} (see the comprehensive review \cite{Dain:2017jkj}), in which case the initial data is taken to be complete initial data sets with two ends; one asymptotically flat and one either asymptotically flat or asymptotically cylindrical.  The new feature is to allow for a minimal surface boundary in the initial data. 

In the present work we will address the analogous problem for 5-dimensional spacetime, in which the natural setting is biaxisymmetric initial data (i.e. the data admits a $U(1) \times U(1)$ isometry subgroup). Previously, in a series of works, we extended the mass-angular momentum inequality with charge to this setting \cite{Alaeemassam,Alaeechargemass,Alaeering,Alaeemassfunc,Alaeelocal,Alaeesmall,Alaeeremark,AlaeeKhuriYau,AlaeeYau}. To motivate the desired inequality, consider the 5-dimensional non-extreme Myers-Perry black hole. This is a stationary, asymptotically flat five-dimensional black hole with spatial horizon cross sections of topology $S^3$.  The solution is characterized by three parameters $(\mathfrak{m}, \mathcal{J}_1, \mathcal{J}_2)$ corresponding to the ADM mass and two independent angular momenta associated to two orthogonal 2-planes of rotation.  Inspection of the explicit solution shows that the conserved quantities satisfy the following relation:
\begin{equation}\label{1}
2\left(\frac{3A}{16\pi}\right)^2
=\frac{8m^3}{3\pi}-\frac{9}{4}(\mathcal{J}_{1}^{2}+\mathcal{J}_{2}^{2})
+\sqrt{\left[\frac{8m^3}{3\pi}-\frac{9}{4}(\mathcal{J}_{1}^{2}+\mathcal{J}_{2}^{2})\right]^2
	-4\left(\frac{9}{4}\right)^{2}\mathcal{J}_{1}^{2}\mathcal{J}_{2}^{2}},
\end{equation}
where $A$ is the black hole's horizon area. Observe that this has the structure of a quadratic formula, namely $\left(\frac{3A}{16\pi}\right)^2$ satisfies the quadratic equation $x^2-bx+c=0$ with
\begin{equation}
b=\frac{8m^3}{3\pi}-\frac{9}{4}(\mathcal{J}_{1}^{2}+\mathcal{J}_{2}^{2}),\text{ }\text{ }\text{ }\text{ }\text{ }
c=\left(\frac{9}{4}\right)^{2}\mathcal{J}_{1}^{2}\mathcal{J}_{2}^{2}.
\end{equation}
Solving for $b$ in the quadratic equation then leads to a formula for the mass in terms of the area and angular momenta of the black hole
\begin{equation}\label{2}
m^3=\frac{3\pi}{8}\left(\frac{3A}{16\pi}\right)^2
+\frac{\frac{3\pi}{8}\left(\frac{9}{4}\right)^{2}\mathcal{J}_{1}^{2}\mathcal{J}_{2}^{2}}
{\left(\frac{3A}{16\pi}\right)^2}
+\frac{27\pi}{32}(\mathcal{J}_{1}^{2}+\mathcal{J}_{2}^{2}).
\end{equation}

Applying Penrose's heuristic argument, based upon the weak cosmic censorship and final state conjectures, we  obtain the conjectured spacetime version of the Penrose inequality with angular momenta, valid for four-dimensional initial data, namely
\begin{equation}\label{3}
m^3\geq\frac{3\pi}{8}\left(\frac{3A}{16\pi}\right)^2
+\frac{\frac{3\pi}{8}\left(\frac{9}{4}\right)^{2}\mathcal{J}_{1}^{2}\mathcal{J}_{2}^{2}}
{\left(\frac{3A}{16\pi}\right)^2}
+\frac{27\pi}{32}(\mathcal{J}_{1}^{2}+\mathcal{J}_{2}^{2})\text{ }\text{ }\text{ }\text{ whenever }\text{ }\text{ }\text{ }A\geq 8\pi\sqrt{\mathcal{J}_{1}\mathcal{J}_{2}}.
\end{equation}
Note that this inequality has been rigorously established by Bray and Lee \cite{Bray:2007opu} when $\mathcal{J}_{1}=\mathcal{J}_{2}=0$, in which case it reduces to the classic Riemannian Penrose inequality (their inequality looks slightly different because they use a different normalization when defining the ADM mass).  Observe that the area-angular momentum inequality is implicitly used in this conjecture, because it can easily be seen that precisely under this condition, the right-hand side of \eqref{2} is monotonically increasing (or nondecreasing) as a function of $A$ with fixed angular momenta. Such a monotonic property is needed in Penrose's heuristic arguments. Moreover, this area-angular momenta inequality has already been established by Hollands for a single black hole \cite{Hollands:2011sy} and extended in \cite{Alaeearea}.  Thus, for a single black hole this is not an extra condition, but for multiple black holes it adds an extra restriction.  Note that if one evaluates the right-hand side at the critical point $A=8\pi\sqrt{\mathcal{J}_{1}\mathcal{J}_{2}}$, corresponding to the extreme Myers-Perry black hole, then \eqref{2} turns into the mass-angular momentum inequality proved in \cite{Alaeemassam} (see also \cite{Alaeechargemass} for a generalization to include charge). Another perhaps more general version of the Penrose inequality can be obtained via the same heuristic arguments, namely
\begin{equation}\label{4}
2\left(\frac{3A}{16\pi}\right)^2
\leq \frac{8m^3}{3\pi}-\frac{9}{4}(\mathcal{J}_{1}^{2}+\mathcal{J}_{2}^{2})
+\sqrt{\left[\frac{8m^3}{3\pi}-\frac{9}{4}(\mathcal{J}_{1}^{2}+\mathcal{J}_{2}^{2})\right]^2
	-4\left(\frac{9}{4}\right)^{2}\mathcal{J}_{1}^{2}\mathcal{J}_{2}^{2}}.
\end{equation}
Note that this one does not need the auxiliary area-angular momentum inequality. The relationship between the two versions of the Penrose inequality is as follows. It should be possible to show that inequality \eqref{4} is algebraically equivalent to the following dichotomy
\begin{align}
\begin{split}
m^3\geq&\frac{3\pi}{8}\left(\frac{3A}{16\pi}\right)^2
+\frac{\frac{3\pi}{8}\left(\frac{9}{4}\right)^{2}\mathcal{J}_{1}^{2}\mathcal{J}_{2}^{2}}
{\left(\frac{3A}{16\pi}\right)^2}
+\frac{27\pi}{32}(\mathcal{J}_{1}^{2}+\mathcal{J}_{2}^{2})\text{ }\text{ }\text{ }\text{ whenever }\text{ }\text{ }\text{ }A\geq 8\pi\sqrt{\mathcal{J}_{1}\mathcal{J}_{2}},\\
m^3\geq& \frac{27\pi}{32}\left(\mathcal{J}_1+\mathcal{J}_2\right)^2\text{ }\text{ }\text{ }\text{ whenever }\text{ }\text{ }\text{ }A\leq 8\pi\sqrt{\mathcal{J}_{1}\mathcal{J}_{2}}.
\end{split}
\end{align}
The first of these inequalities coincides with \eqref{3} and the second one is the mass-angular momentum inequality that we established independent of the area-angular momentum inequality. Furthermore, the lower bound in \eqref{3} is equivalent to the upper bound \eqref{4} and the following lower bound for area, by viewing it as a quadratic equation in $\left(\frac{3A}{16\pi}\right)^2$:
\begin{equation}\label{5}
2\left(\frac{3A}{16\pi}\right)^2
\geq \frac{8m^3}{3\pi}-\frac{9}{4}(\mathcal{J}_{1}^{2}+\mathcal{J}_{2}^{2})
-\sqrt{\left[\frac{8m^3}{3\pi}-\frac{9}{4}(\mathcal{J}_{1}^{2}+\mathcal{J}_{2}^{2})\right]^2
	-4\left(\frac{9}{4}\right)^{2}\mathcal{J}_{1}^{2}\mathcal{J}_{2}^{2}}.
\end{equation}
Thus, the Penrose inequality \eqref{3} is equivalent to \eqref{4} and \eqref{5} with the auxiliary area angular momentum inequality.

We now state our main result. Consider a simply connected asymptotically flat initial data set $(M,g,k)$ consisting of a Riemannian four-manifold $(M,g)$ and a symmetric 2-tensor $k$ satisfying the constraint equations 
\begin{equation}
16\pi \mu = R_g + (\text{Tr}_g k)^2 - |k|^2_g, \qquad 8 \pi J = \text{div}_g ( k - (\text{Tr}_g k)g)
\end{equation} where $(\mu, J)$ represent the energy and momenta densities respectively.  We will assume hereafter that the data is maximal, namely $\text{Tr}_g k =0$ and that the dominant energy condition $\mu \geq |J|$ holds.  Furthermore, we restrict to biaxisymmetric data, which means that the isometry group of $(M,g)$ admits a subgruop isomorphic to $U(1) \times U(1) \equiv U(1)^2$ with no discrete isotropy subgroups,  and that all other quantities associated to the initial data (in particular the second fundamental form $k$ of the embedding) are invariant under this action.   Asymptotic flatness requires that there exists an end $M_{\text{end}} \subset M$ diffeomorphic to $\mathbb{R}^4 \setminus \text{Ball}$ and that there exists $\epsilon >0$ such that in the coordinate chart defined by this diffeomorphism, the data satisfies the decay
\begin{equation}
g_{ij} = \delta_{ij} + O_1 (r^{-1 - \epsilon}), \quad k_{ij} = O(r^{-2-\epsilon}), \quad \mu \in L^2(M_{\text{end}}), \quad J_i \in L^1(M_{\text{end}}).
\end{equation}  Let $\eta_{(i)}$ denote the Killing vector fields generating the $U(1)^2$ action. The asymptotic conditions and biaxismmetry imply that the ADM angular momenta
\begin{equation}
\mathcal{J}_{(k)} = \frac{1}{8\pi} \int_{S_\infty} (k_{ij} - (\text{Tr}_g k) g_{ij}) \nu^i \eta^j_{(k)} \; dS
\end{equation} is well defined, provided $J(\eta_{(i)}) \in L^1(M_{\text{end}})$. 

Our proof relies on the existence of a `generalized Weyl coordinate system', which is a global system of cylindrical-type coordinates in which the class of biaxisymmetric metrics take the form 
\begin{equation}\label{GBmetric}
	g=\frac{e^{2U+2\alpha}}{2\sqrt{\rho^2+(z-\mathfrak{m})^2}}\left(d\rho^2+d z^2\right)+e^{2U}\lambda_{ij}\left(d\phi^i+A^i_l d y^l\right)\left(d\phi^j+A^j_l d y^l\right),
\end{equation}
for smooth functions $U$, $\alpha$, $A_{l}^{i}$, $\omega^i$ and a symmetric positive definite matrix $\lambda=(\lambda_{ij})$ with $\det\lambda=\rho^{2}$, $i,j,l=1,2$, $(y^1,y^2)=(\rho,z)$. All these quantities are independent of the angular coordinates $(\phi^{1},\phi^{2})$ and satisfying the asymptotics \eqref{F1}-\eqref{F12}. Without lose of generality we assume $\mathfrak{m}\geq 0$. Moreover, the coordinates should take values in the following ranges $\rho\in[0,\infty)$, $z \in \mathbb{R}$, and we have the identifications $\phi^i\sim \phi^i + 2\pi$, $i=1,2$.  Such a system of coordinates can be shown to always exist for stationary, biaxisymmetric asymptotically flat solutions of the {\it spacetime} vacuum Einstein equations \cite{Hollands:2007aj}, in which case \eqref{GBmetric} would describe the induced metric on a canonical time slice.  Although this may seem to be a restrictive assumption on the initial data,  note that an analogous system can be shown to exist in the $3+1$-dimensional case.  It is reasonable to expect our class of biaxisymmetric, simply connected data can be cast in the form \eqref{GBmetric}. 

To incorporate the presence of the black hole horizon in our maximal initial data set, the quantity $A$ is taken to be the area of the outermost minimal surface (because apparent horizons reduce to minimal surface in the case of maximal data). Hence we assume $(M,g)$ is a manifold with a single component minimal surface boundary. The boundary represents the black hole horizon which, under the above symmetry hypothesis, must have topology $S^3$ ( or its quotient $L(p,q)$), $S^1\times S^2$, and connected sum of these two cases~\cite{Galloway:2005mf, Hollands:2007aj}.

In the Weyl coordinates, the $U(1)^2$ action degenerates on the axis set $\rho =0$. As explained below, the two symmetry axes of the asymptotic $S^3$ correspond to semi-infinite intervals along the $z-$axis.  In contrast to the `generalized Brill' system' in Weyl coordinates,  the horizon corresponds to a finite interval on the $z-$ axis upon which the functions $(U,\alpha, V)$ exhibit certain singular behaviour which can be modelled by the behaviour of the explicitly known functions associated to Schwarzschild initial data $(U_S, \alpha_S, V_S)$. The `regularized' difference $\bar{U} = U - U_S,\bar\alpha = \alpha - \alpha_S, \bar{V} = V - V_S$ are uniformly bounded with bounded first derivatives.  In our argument, which involves an integration by parts of the mass formula, the combination $\beta:=2\bar\alpha + 6 \bar U - \text{sgn}(z)\bar{V}$ appears.  We may now state our main result as follows:
\begin{theorem}\label{mainthm}
	Let $(M,g,k)$ be a smooth, asymptotically flat, bi-axially symmetric, maximal initial data set for the five-dimensional Einstein equations satisfying $\mu\geq 0$
	and $J(\eta_{(l)})=0$, $l=1,2$ and with non-empty minimal surface boundary such that $A\geq 8\pi\sqrt{\mathcal{J}_{1}\mathcal{J}_{2}}$. Suppose $M^{4}$ is a manifold diffeomorphic to $\mathbb{R}^{4}\setminus \text{Ball}$ with a minimal surface boundary with $S^3$ topology.  Assume $(M,g)$ admits a global system of Weyl coordinates. Let $A_{MP}$ and $\beta_{MP}$ denotes horizon area and Weyl coordinate function for the unique Myers-Perry black hole sharing the same angular momenta and horizon rod length as the initial data set. Then 
	\begin{equation}\label{mainineq}
	m\geq \left(\frac{3\pi}{8}\left(\frac{3A_{MP}}{16\pi}\right)^2
	+\frac{\frac{3\pi}{8}\left(\frac{9}{4}\right)^{2}\mathcal{J}_{1}^{2}\mathcal{J}_{2}^{2}}
	{\left(\frac{3A_{MP}}{16\pi}\right)^2}
	+\frac{27\pi}{32}(\mathcal{J}_{1}^{2}+\mathcal{J}_{2}^{2})\right)^{1/3}+\frac{\pi}{4}\int_{-\mathfrak{m}}^{\mathfrak{m}}\left(\beta(0,z)-\beta_{MP}(0,z)\right)
	\end{equation}
	Moreover equality holds if and only if $(M,g,k)$ is isometric to the canonical slice of the corresponding Myers-Perry spacetime. 
\end{theorem}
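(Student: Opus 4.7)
The plan is to adapt the harmonic-map/mass-functional framework previously used for mass-angular momentum inequalities in five dimensions (as in \cite{Alaeemassam,Alaeechargemass,Alaeemassfunc}) to the present setting in which an inner minimal-surface boundary is present. The starting point is the biaxisymmetric ADM mass formula for data admitting the generalized Weyl coordinates \eqref{GBmetric}, which can be written schematically as a bulk integral of a quadratic form in derivatives of $U$, $\lambda_{ij}$ and the two twist potentials $v_{1},v_{2}$, plus a boundary term at infinity coming from $\alpha$. Using $\tr_{g}k=0$, the dominant energy condition $\mu\geq 0$, and the two axisymmetric momentum constraints (which, since $J(\eta_{(l)})=0$, promote $v_{1},v_{2}$ to globally defined potentials on $M$), together with an integration by parts that brings in the horizon rod $[-\mathfrak{m},\mathfrak{m}]\subset\{\rho=0\}$, I would establish a lower bound
\begin{equation*}
m\geq \mathcal{I}(\Phi)+\frac{\pi}{4}\int_{-\mathfrak{m}}^{\mathfrak{m}}\beta(0,z)\,dz,
\end{equation*}
where $\Phi=(U,\lambda_{ij},v_{1},v_{2}):\mathbb{R}^{3}\setminus\Gamma\to SL(3,\mathbb{R})/SO(3)$ is the associated map into the Riemannian symmetric-space target, and $\mathcal{I}$ is the renormalized Dirichlet-type energy whose critical points are precisely the harmonic maps describing stationary biaxisymmetric vacuum solutions.

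The second step is to introduce the Myers-Perry reference map $\Phi_{MP}$, namely the harmonic map into $SL(3,\mathbb{R})/SO(3)$ corresponding to the unique Myers-Perry spacetime with angular momenta $(\mathcal{J}_{1},\mathcal{J}_{2})$ and horizon rod length $2\mathfrak{m}$; the hypothesis $A\geq 8\pi\sqrt{\mathcal{J}_{1}\mathcal{J}_{2}}$ guarantees that this reference is sub-extreme. The key analytic input is the nonpositive sectional curvature of $SL(3,\mathbb{R})/SO(3)$, which gives geodesic convexity of the pointwise energy density along the geodesic joining $\Phi_{MP}(p)$ to $\Phi(p)$. Provided the asymptotic, axis, and horizon boundary data of $\Phi$ and $\Phi_{MP}$ match (by construction at infinity and on the axis rods, with discrepancy on the horizon rod captured by $\beta-\beta_{MP}$), this yields
\begin{equation*}
\mathcal{I}(\Phi)-\mathcal{I}(\Phi_{MP})\geq 0.
\end{equation*}
Evaluating $\mathcal{I}(\Phi_{MP})$ in closed form using the explicit Myers-Perry Weyl data recovers the cube-root expression on the right-hand side of \eqref{mainineq} with $A$ replaced by $A_{MP}$, and careful bookkeeping of the horizon boundary terms produces precisely the $\tfrac{\pi}{4}\int_{-\mathfrak{m}}^{\mathfrak{m}}(\beta-\beta_{MP})$ correction.

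For the rigidity statement, equality forces every intermediate inequality to be saturated: the dominant energy condition gives $\mu\equiv 0$ and (combined with $J(\eta_{(l)})=0$) the vanishing of the non-twist components of $k$; the convex-energy inequality forces $\Phi\equiv\Phi_{MP}$, so $U$, $\lambda_{ij}$, and the twist potentials agree with their Myers-Perry counterparts; finally $\alpha$ is reconstructed from $(U,\lambda_{ij})$ by integrating the first-order Weyl equations, uniquely determined by its value at infinity, giving $\alpha\equiv\alpha_{MP}$. By the uniqueness of biaxisymmetric asymptotically flat solutions with prescribed rod structure (cf.\ \cite{Hollands:2007aj}), the initial data $(M,g,k)$ is then isometric to the canonical slice of the corresponding Myers-Perry spacetime.

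The principal technical obstacle I anticipate is the analytic control near the horizon rod. Along the axis rods the smooth biaxisymmetric geometry forces controlled degeneration of $\lambda_{ij}$, but along the horizon rod the Weyl functions $U,\alpha,V$ themselves blow up logarithmically, modelled on the Schwarzschild reference $(U_{S},\alpha_{S},V_{S})$; justifying the integration by parts, verifying uniform boundedness of $\bar{U},\bar{\alpha},\bar{V}$ with bounded first derivatives, and isolating the finite horizon contribution as a boundary integral of $\beta$ alone, all require delicate comparison estimates with the Schwarzschild reference. A second subtle point is that the convexity/reduction argument must now be applied to maps with nontrivial rod data prescribed on the \emph{finite} horizon interval (rather than only on the semi-infinite axis rods), so a cutoff regularization near the horizon followed by a limiting argument, tracking all boundary contributions carefully, will likely be required to convert the formal convexity bound into a rigorous inequality.
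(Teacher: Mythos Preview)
Your proposal is correct and follows essentially the same strategy as the paper: derive a mass lower bound $m\geq\mathcal{I}(\Psi)+\tfrac{\pi}{4}\int_H\beta\,dz+m_S$ from the scalar-curvature formula and integration by parts (using the Schwarzschild subtraction $\overline{U}=U-U_S$, $\overline{V}=V-V_S$, $\overline{\alpha}=\alpha-\alpha_S$ to regularize the logarithmic blow-up on the horizon rod), then invoke geodesic convexity of the harmonic energy into $SL(3,\mathbb{R})/SO(3)$ via a cut-and-paste argument to obtain $\mathcal{I}(\Psi)\geq\mathcal{I}(\Psi_{MP})$, and finally identify $\mathcal{I}(\Psi_{MP})+\tfrac{\pi}{4}\int_H\beta_{MP}\,dz+m_S$ with $m_{MP}$ (the cube-root expression) by applying the same mass formula to the Myers-Perry data itself. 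Your anticipation of the two principal technical obstacles---the Schwarzschild-referenced control near the horizon rod, and the need for cutoff/limiting arguments near the rod endpoints $p_\pm$ and along the axis to make the convexity bound rigorous---matches precisely what the paper carries out in its sequence of cut-and-paste lemmas.
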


Note that if the initial data set $(M,g,k)$ has the same Weyl coordinate functions as the associated Myers-Perry black hole, i.e., $\beta(0,z)=\beta_{MP}(0,z)$, then this result reduces to a proof of the Penrose inequality conjecture \eqref{3}. Our result represents a generalization of the Penrose-type inequality with angular momentum established recently for three-dimensional axisymmetric initial data sets  \cite{Khuri:2019pag}.  It should be noted that the Riemannian Penrose inequality, which holds up to dimension 7, tacitly assumes a minimal surface boundary of spherical topology. A natural question is whether a similar inequality would hold for initial data with a black hole boundary of topology $S^1 \times S^2$.  Indeed, a mass-angular momenta inequality for black ring initial data was proved  in~\cite{Alaeering}, and a Penrose-type inequality might be expected along heuristic lines when considering gravitational collapse to an stationary black ring spacetime.  However, the three-parameter family of black ring spacetime solutions is expected to be dynamically unstable.  Indeed an analysis of perturbations of black rings suggests that a Penrose-type inequality with angular momenta adapted to this setting is unlikely to hold~\cite{Figueras:2011he}. 

\section{The mass functional and reduced energy}\label{Sec1}
 We first record the appropriate asymptotics in three different regions, namely in the asymptotically flat region, the horizon, and near the axis of symmetry. The particular decay rates are motivated in general by the indicated asymptotically flat geometries, and by the desire for certain coefficients, including $\lambda_{ij}$ and $A^{i}_{l}$, to not yield a direct contribution to the ADM mass. In what follows $a,b$ are functions of only $\theta$, $\kappa>0$ is a constant, and ${\sigma}={\sigma}_{ij}d\phi^{i}d\phi^{j}$ is a Riemannian metric on the torus $T^2$ depending only on $\theta$. We begin with the designated asymptotically flat end characterized by $r\rightarrow\infty$.  Note that the Euclidean metric on $\mathbb{R}^4$ in this coordinate system is given by
 \begin{equation}
 \delta_4 = \frac{d\rho^2 + dz^2}{2\sqrt{\rho^2 + z^2}} + \sigma_{ij} d\phi^i d\phi^j = dr^2 + r^2 d\theta^2 + r^2 (\sin^2\theta (d\phi^1)^2 + \cos^2\theta( d\phi^2)^2)
 \end{equation} where standard spherical coordinates $r \in [0,\infty)$ and $\theta \in [0,\pi/2]$ are related to the Weyl coordinates via
 \begin{equation}
 \rho = \frac{r^2}{2}  \sin^2 2\theta, \qquad z = \frac{r^2}{2} \cos 2\theta.
 \end{equation} Near the asymptotically flat end, we require that the functions appearing in the metric \eqref{GBmetric} satisfy
\begin{equation}\label{F1}
	U=O_{1}(r^{-1-\kappa}),\text{ }\text{ }\text{ }\text{ }\text{ }
	\alpha=O_1(r^{-1-\kappa}),\text{ }\text{ }\text{ }\text{ }\text{ }
	A_{\rho}^i=\rho O_{1}(r^{-5-\kappa}),
	\text{ }\text{ }\text{ }\text{ }\text{ }
	A_z^i=O_{1}(r^{-3-\kappa}),
\end{equation}
\begin{equation}\label{F2}
	\lambda_{ii}=\left(1+(-1)^{i}ar^{-1-\kappa}+O_{1}(r^{-2-\kappa})\right)\sigma_{ii},\text{ }\text{ }\text{ }\text{ }\text{ }\text{ }\text{ }\lambda_{12}=\rho^2 O_1(r^{-5-\kappa}),
	\text{ }\text{ }\text{ }\text{ }\text{ }\text{ }\text{ }|k|_{g}=O(r^{-2-\kappa}).
\end{equation}  The other two important regions of interest are the set $\Gamma = \{\rho=0, z\in \mathbb{R}\}$. This is subdivided into two regions. The first are the {\it axes} $\Gamma_\pm = \{ \rho =0, \pm z > \mathfrak{m}\}$ for some $\mathfrak{m} > 0$ and the {\it horizon rod} $H=\{ \rho =0, -\mathfrak{m} < z < \mathfrak{m}\} $. On the horizon some of the functions appearing in \eqref{GBmetric} have a (prescribed) singular behaviour modelled on Schwarzschild initial data (see Appendix A). In particular,  the asymptotics as $\rho\rightarrow 0$ on  $H$ are given by
\begin{equation}\label{F7}
	U=-\frac{1}{2}\log\rho+O_{1}(1),\text{ }\text{ }\text{ }\text{ }\text{ }\text{ }
	\alpha=\frac{1}{2}\log\rho+O_1(1),
\end{equation}
\begin{equation}\label{F8}
	\lambda_{11},\lambda_{22}=O(1),\quad \lambda_{12}=O(\rho),\quad A_{\rho}^i= O_{1}(\rho),
	\text{ }\text{ }\text{ }\text{ }\text{ }\text{ }
	A_z^i=O_{1}(1),\text{ }\text{ }\text{ }\text{ }\text{ }\text{ }
	|k|_{g}=O(1),
\end{equation} On the symmetry axes $\Gamma_\pm$, on which the Killing vector fields $\partial_{\phi^1}$ and $\partial_{\phi^2}$ respectively degenerate, the asymptotics as $\rho\rightarrow 0$ are 
\begin{equation}\label{F10}
	U=O_{1}(1),\text{ }\text{ }\text{ }\text{ }\text{ }\text{ }
	\alpha=O_1(1),\text{ }\text{ }\text{ }\text{ }\text{ }\text{ }
	A_{\rho}^i= O_{1}(\rho),
	\text{ }\text{ }\text{ }\text{ }\text{ }\text{ }
	A_z^i=O_{1}(1),\text{ }\text{ }\text{ }\text{ }\text{ }\text{ }
	|k|_{g}=O(1),
\end{equation}
\begin{equation}\label{F12}
\lambda_{22},\lambda_{12}=O(\rho^{2}),\text{ }\text{ }\text{ }\text{ }\text{ }
	\lambda_{11}=O(1)
	\text{ }\text{ }\text{ on }\text{ }\text{ }\Gamma_-,\text{ }\text{ }\text{ }\text{ }\text{ }
	\lambda_{11},\lambda_{12}=O(\rho^{2}),\text{ }\text{ }\text{ }\text{ }\text{ }
	\lambda_{22}=O(1)
	\text{ }\text{ }\text{ on }\text{ }\text{ }\Gamma_+.
\end{equation}It should be pointed out that regularity of the geometry along the axis implies a compatibility condition between $\alpha$ and $\lambda$. To see this, let $\vartheta\in(-\infty,2\pi)$ be the
cone angle deficiency coming from the metric \eqref{GBmetric} at the axes of rotation, that is
\begin{equation}
	\frac{2\pi}{2\pi-\vartheta}=\lim_{\rho\rightarrow 0}\frac{2\pi\cdot\mathrm{Radius}}{\mathrm{Circumference}}
	=\lim_{\rho\rightarrow 0}\frac{{\displaystyle \int_{0}^{\rho}}\sqrt{\frac{e^{2U+2\alpha}}{2\sqrt{\rho^{2}+(z-\mathfrak{m})^{2}}}
			+e^{2U}\lambda_{ij}A_{\rho}^{i}A_{\rho}^{j}}d\rho}
	{\sqrt{e^{2U}\lambda_{ii}}}=\frac{e^{\alpha(0,z)}}{\sqrt{2|z-\mathfrak{m}|}}\lim_{\rho\rightarrow 0}
	\frac{\rho}{\sqrt{\lambda_{ii}}}
\end{equation}
where $i=1,2$. The cone angle deficiency should vanish $\vartheta=0$, since $(M,g)$ is smooth across the axis, and thus
\begin{equation}\label{cone}
		\alpha(0,z)=\frac{1}{2}\log\left(|z-\mathfrak{m}|\partial_{\rho}^{2}\lambda_{22}(0,z)\right)\quad\text{on $\Gamma_-$},\quad \alpha(0,z)=\frac{1}{2}\log\left(|z-\mathfrak{m}|\partial_{\rho}^{2}\lambda_{11}(0,z)\right)\quad\text{on $\Gamma_+$}\,.
	\end{equation}
As shown in \cite{Alaeemassam}, since the matrix $\lambda$ has fixed determinant, $\det\lambda=\rho^{2}$, there are only two independent functions contained in $\lambda$ which we parameterize as follows
	\begin{equation}\label{inverse}
	V=\frac{1}{2}\log\left(\frac{\lambda_{11}\mu^\mathfrak{m}_+}{\lambda_{22}\mu^\mathfrak{m}_-}\right),
	\qquad W=\sinh^{-1}\left(\frac{\lambda_{12}}{\rho}\right),
	\end{equation}where $\mu^\mathfrak{m}_{\pm}=\sqrt{\rho^2+(z-\mathfrak{m})^2}\pm (z-\mathfrak{m})$. Combining with \eqref{F8}, it follows that $W=O(\rho)$ on $\Gamma_+\cup \Gamma_-$ and $W=O(1)$ on $H$. Together with \eqref{cone}, it shows that
	\begin{eqnarray}\label{V}
	2\alpha (0,z)=-{V}(0,z)\quad \text{on $\Gamma_-$},\qquad 2\alpha (0,z)=V(0,z)\quad \text{on $\Gamma_+$}
	\end{eqnarray} Following the computation given in \cite{Alaeeremark}, the scalar curvature of the metric \eqref{GBmetric} is
\begin{equation}\label{SCALAR}
	\begin{split}
	e^{2U+2\alpha-\log\left(2\sqrt{\rho^2+(z-\mathfrak{m})^2}\right)}R&=-6\Delta U-2\Delta_{\rho,z}\alpha-6|\nabla U|^2
	-\frac{1}{2}|\nabla V|^2-\frac{1}{2}|\nabla W|^2\\&-\frac{1}{2}\sinh^{2}W\abs{\nabla\left(V+h_2\right)}^2
	-\delta_{3}(\nabla h_2,\nabla V)\\
	&-\frac{1}{4}e^{-2\alpha+\log\left(2\sqrt{\rho^2+(z-\mathfrak{m})^2}\right)}\lambda_{ij}(A_{\rho,z}^{i}-A_{z,\rho}^{i})
	(A_{\rho,z}^{j}-A_{z,\rho}^{j}),
	\end{split}
\end{equation}
where $\Delta$ and the norm $|\cdot|$ are with respect to the following flat metric $	\delta_{3}=d\rho^{2}+d z^{2}+\rho^{2} d\phi^{2}$ on an auxiliary $\mathbb{R}^{3}$ in which all quantities are independent of the new variable $\phi\in[0,2\pi]$, $\Delta_{\rho,z}$ is with respect to the flat metric
$\delta_{2}=d\rho^{2}+dz^{2}$ on the orbit space, and $h_{1}=\frac{1}{2}\log\rho$ and $h_{2}=\frac{1}{2}\log\left(\frac{\mu^\mathfrak{m}_-}{
	\mu^\mathfrak{m}_+}\right)$ are harmonic functions on $(\mathbb{R}^{3}\setminus\Gamma,\delta_{3})$.
The ADM mass of the metric \eqref{GBmetric} is \cite{Alaeeremark} 
\begin{equation}\label{massADM}
	m=-\frac{1}{8}\lim_{r\to\infty}\int_{S^3_r}\left(\partial_r\left[6U+2\alpha\right]-4r^{-1}\alpha \right)\, dS_r
\end{equation}where $dS_r=r^3\sin\theta\cos\theta\,d\theta d\phi^1 d\phi^2$ is the volume element of a canonical round 3-sphere $S^3_r$ with radius $r$.
Integrating the first two terms on the right hand of \eqref{SCALAR} over $\R^3$ gives
\begin{equation}\label{mass1}
\begin{split}
\frac{1}{8}\int_{\R^3}-6\Delta U-2\Delta_{\rho,z}\alpha\,\, dx
=&\frac{1}{8}\lim_{\epsilon\to 0}\int_{\rho=\epsilon}\left(\partial_{\rho}\left(6 U+2\alpha\right)-2\alpha\partial_\rho\log\rho\right)\\
&-\frac{1}{8}\lim_{r\to\infty}\int_{S_r}\left(\partial_r\left[6U+2\alpha\right]-4r^{-1}\alpha \right)\,dS_r\\
=&\frac{\pi}{4}\int_{H}\rho\partial_{\rho}\left(6 U+2\alpha\right) dz-\frac{\pi}{2}\int_{\Gamma}\alpha dz+m\\
=&-\frac{\pi}{2}\int_{-\mathfrak{m}}^{\mathfrak{m}}dz-\frac{\pi}{2}\int_{\Gamma}\alpha dz+m\\
=&-\pi \mathfrak{m}-\frac{\pi}{2}\int_{\Gamma}\alpha dz+m
\end{split}
\end{equation}Integrating the fifth term in the right hand of \eqref{SCALAR} yields 
\begin{align}
\begin{split}\label{boundary1}
\frac{1}{8}\int_{\mathbb{R}^{3}}\delta_{3}(\nabla h_2,\nabla V)\,dx
=&-\lim_{\varepsilon\rightarrow 0}\frac{1}{8}\int_{\rho=\varepsilon}V\partial_{\rho}h_2\\
=&\frac{\pi}{4}\int_{\Gamma_-}V\,dz-\frac{\pi}{4}\int_{\Gamma_+ }V\,dz+\frac{\pi}{4}\int_{H}V dz\\
\end{split}
\end{align}
Combining \eqref{SCALAR} and \eqref{mass1}, the total ADM mass is
\begin{align}\label{massf}
	\begin{split}
		m=&\frac{1}{16}\int_{\mathbb{R}^{3}}\left(12|\nabla U|^{2}+|\nabla V|^2+|\nabla W|^2+\sinh^{2}W\abs{\nabla\left(V+h_2\right)}^2\right)dx\\
		&+\frac{1}{16}\int_{\mathbb{R}^{3}}e^{2U+2\alpha-\log\left(2\sqrt{\rho^2+(z-\mathfrak{m})^2}\right)}R\,dx\\
		&+\frac{1}{32}\int_{\mathbb{R}^{3}}
		e^{-2\alpha+\log\left(2\sqrt{\rho^2+(z-\mathfrak{m})^2}\right)}\lambda_{ij}(A_{\rho,z}^{i}-A_{z,\rho}^{i})
		(A_{\rho,z}^{j}-A_{z,\rho}^{j})dx\\
		&+\frac{\pi}{2}\int_{\Gamma}\alpha dz+\pi \mathfrak{m},
	\end{split}
\end{align}
where the volume form $dx$ is again with respect to $\delta_{3}$. Next, we break up the data $(\alpha,U,V)$ in the following way
\begin{equation}
	U=U_S+\overline{U},\quad \alpha=\alpha_S+\overline{\alpha},\quad V=V_{S}+\overline{V}.
\end{equation}Combining this with \eqref{massf}, we obtain \begin{align}
	\begin{split}
		m=&\frac{1}{16}\int_{\mathbb{R}^{3}}12|\nabla \overline{U}|^{2}+12|\nabla U_S|^{2}+|\nabla V|^{2}+|\nabla V_S|^{2}+|\nabla W|^{2}
		+\sinh^{2}W|\nabla (V+h_2)|^{2}
		dx\\
		&+\frac{1}{16}\int_{\mathbb{R}^{3}}\frac{1}{4}e^{-2\alpha+\log\left(2\sqrt{\rho^2+(z-\mathfrak{m})^2}\right)}\lambda_{ij}(A_{\rho,z}^{i}-A_{z,\rho}^{i})
		(A_{\rho,z}^{j}-A_{z,\rho}^{j})dx\\
		&+\frac{1}{8}\int_{\mathbb{R}^{3}}
		e^{2U+2\alpha-\log\left(2\sqrt{\rho^2+(z-\mathfrak{m})^2}\right)}Rdx-\frac{1}{8}\int_{\R^3}\left(12\overline{U}\Delta U_S+\overline{V}\Delta V_{S}\right) dx\\
		&+\frac{1}{8}\lim_{r\to\infty}\int_{S_r}\left(12\overline{U}\partial_r U_S+\overline{V}\partial_r V_{S}\right) -\frac{1}{8}\lim_{\epsilon\to 0}\int_{\rho=\epsilon}\left(12\overline{U}\partial_{\rho} U_S+\overline{V}\partial_{\rho} V_{S}\right)\\
		&-\frac{1}{8}\lim_{\epsilon\to 0}\int_{\rho=\epsilon}\rho\partial_{\rho}\left(6 U+2\alpha\right)+\frac{\pi}{4}\int_{H}(2\alpha+V) dz+\pi \mathfrak{m} .
	\end{split}
\end{align}
Since $U,V=O(1)$ on $\Gamma_-\cup \Gamma_+$, the integrals on axis are non-zero only on horizon rod. Since $V_{S}$ and $U_S$ are harmonic on $\R^3 \setminus \Gamma$ , the forth integral vanishes. Moreover, because of the estimates of $\overline{U},U_S,\overline{V}$, and $V_{S}$ at infinity, the fifth integral vanishes. As in the computation leading to \eqref{mass1}, the sixth integral is $\pi m_S$ and  $\rho\partial_{\rho} U_S=-\frac{1}{2},\rho\partial_{\rho}V_{S}=0$ as $\rho\to 0$. Combining this with the Schwarzschild mass formula \eqref{mS}, we have 
\begin{align}\label{m1}
	\begin{split}
		m=&\frac{1}{16}\int_{\mathbb{R}^{3}}12|\nabla \overline{U}|^{2}+|\nabla \overline{V}|^{2}+|\nabla W|^{2}
		+\sinh^{2}W|\nabla (V+h_2)|^{2}
		dx\\
		&+\frac{1}{16}\int_{\mathbb{R}^{3}}\frac{1}{4}e^{-2\alpha+\log\left(2\sqrt{\rho^2+(z-\mathfrak{m})^2}\right)}\lambda_{ij}(A_{\rho,z}^{i}-A_{z,\rho}^{i})
		(A_{\rho,z}^{j}-A_{z,\rho}^{j})dx\\
		&+\frac{1}{8}\int_{\mathbb{R}^{3}}
		e^{2U+2\alpha-\log\left(2\sqrt{\rho^2+(z-\mathfrak{m})^2}\right)}Rdx+\frac{\pi}{4}\int_{H}\left(2\overline{\alpha}+6\overline{U}+\overline{V}\right)dz+m_S
	\end{split}
\end{align}
where $m_S$ is the mass of the Schwarzschild initial data \eqref{mS}. Next, we follow \cite[Section 3]{Alaeemassam}. Under the assumptions that $J(\eta_{(l)})=0$, $l=1,2$ and $M^{4}$ is simply connected, twist potentials $\zeta^i$ exist such that
\begin{equation}
d\zeta^{l}=2\star \left(p(\eta_{(l)})\wedge\eta_{(1)}\wedge\eta_{(2)}\right), \text{ }\text{ }\text{ }\text{ }\text{ }\text{ }l=1,2.
\end{equation}where $\star$ is the Hodge star with respect to the metric $g$. Moreover, by the constraint equation and $\mu\geq 0$, we have 
\begin{equation}\label{R1}
\begin{split}
e^{2U+2\alpha-\log\left(2\sqrt{\rho^2+z^2}\right)}R&= e^{2U+2\alpha-\log\left(2\sqrt{\rho^2+z^2}\right)}\left(16\pi\mu+|k|_g\right)\\
&\geq \frac{1}{2}e^{-6h_{1}-6U+h_{2}+V}\cosh W
\left|e^{-h_{2}-V}\tanh W\nabla \zeta^{1}-\nabla \zeta^{2}\right|^{2}\\
&+\frac{1}{2}\frac{e^{-6h_{1}-6U-h_{2}-V}}{\cosh W}|\nabla \zeta^{1}|^{2}
\end{split}
\end{equation}Combining \eqref{m1} and \eqref{R1}, we have 
\begin{align}\label{mass}
\begin{split}
m\geq &\mathcal{I}_{\Omega}(\Psi)+\frac{\pi}{4}\int_{H}\left(2\overline{\alpha}+6\overline{U}-\text{sgn}(z)\overline{V}\right)dz+m_S
\end{split}
\end{align}where $\Psi=(\overline{U},\overline{V},{W},\zeta^{1},\zeta^{2})$  and
\begin{align}
\begin{split}
\mathcal{I}_{\Omega}(\Psi)=&\int_{\Omega}12|\nabla \overline{U}|^{2}+|\nabla \overline{V}|^{2}+|\nabla W|^{2}
+\sinh^{2}W|\nabla (V+h_{2})|^{2}
+\frac{e^{-6h_{1}-6U-h_{2}-V}}{\cosh W}|\nabla \zeta^{1}|^{2}dx
\\
&+\int_{\Omega}
e^{-6h_{1}-6U+h_{2}+V}\cosh W
\left|e^{-h_{2}-V}\tanh W\nabla \zeta^{1}-\nabla \zeta^{2}\right|^{2}dx.
\end{split}
\end{align}The right hand side of the above inequality is related to the harmonic energy of a map $\tilde{\Psi}=(u=U+h_1,v=V+h_2,w=W,\zeta^{1},\zeta^{2}):\mathbb{R}^{3}\,\backslash\,\Gamma_-\cup \Gamma_+\rightarrow
SL(3,\mathbb{R})/SO(3)$, on a domain $\Omega\subset\mathbb{R}^{3}$ \cite{Alaeemassam} where the target space metric is an Einstein metric with negative curvature.  The harmonic energy is given by
\begin{align}\label{energy1}
\begin{split}
E_{\Omega}(\tilde{\Psi})=&\int_{\Omega}12|\nabla u|^{2}
+\cosh^{2}w|\nabla v|^{2}
+|\nabla w|^{2}
+\frac{e^{-6u-v}}{\cosh w}|\nabla \zeta^{1}|^{2}dx
\\
&+\int_{\Omega}
e^{-6u+v}\cosh w
\left|e^{-v}\tanh w\nabla \zeta^{1}-\nabla \zeta^{2}\right|^{2}dx.
\end{split}
\end{align}
This energy is related to the reduced energy $\mathcal{I}_{\Omega}$ of the map ${\Psi}=(\overline{U},\overline{V},{W},\zeta^{1},\zeta^{2})$ may be expressed in terms of the harmonic energy of $\tilde{\Psi}$ by
\begin{equation}\label{51}
\mathcal{I}_{\Omega}(\Psi)=E_{\Omega}(\tilde{\Psi})
-12\int_{\partial\Omega}
(2u-h_1-U_S)\partial_{\nu}\left(h_1+U_S\right)-\int_{\partial\Omega}(2v-h_2-V_S)\partial_{\nu}\left(h_{2}+V_S\right),
\end{equation}
where $\nu$ denotes the unit outer normal to the boundary $\partial\Omega$. 
Let $\tilde{\Psi}_{MP}=(u_{MP},v_{MP},w_{MP},\zeta^{1}_{MP},\zeta^{2}_{MP})$ denote the Myers-Perry harmonic map  with the same J and A (see Appendix B), and
let $\Psi_{MP}$ be the associated renormalized map with $u_{MP}=\overline{U}_{MP}+U_S+h_{1}$, $v_{MP}=\overline{V}_{MP}+V_S+h_{2}$, and $w_{MP}=W_{MP}$. Therefore,
$\Psi_{MP}$ is a critical point of $\mathcal{I}$. We will show that $\Psi_{MP}$ achieves the global minimum for $\mathcal{I}$.

\begin{theorem}\label{infimum}
	Suppose that $\Psi=(\overline{U},\overline{V},W,\zeta^{1},\zeta^{2})$ is smooth and satisfies the asymptotics \eqref{fall1}-\eqref{fall4.1}
	with $\zeta^{1}|_{\Gamma}=\zeta^{1}_{MP}|_{\Gamma}$ and $\zeta^{2}|_{\Gamma}=\zeta^{2}_{MP}|_{\Gamma}$, then there exists a constant $C>0$ such that
	\begin{equation}\label{53}
	\mathcal{I}(\Psi)-\mathcal{I}(\Psi_{MP})
	\geq C\left(\int_{\mathbb{R}^{3}}
	\operatorname{dist}_{SL(3,\mathbb{R})/SO(3)}^{6}(\Psi,\Psi_{MP})dx
	\right)^{\frac{1}{3}}.
	\end{equation}
\end{theorem}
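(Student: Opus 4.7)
The plan is to exploit the identification in \eqref{51} of the reduced energy $\mathcal{I}$ as a renormalized harmonic-map energy $E$ for maps into the symmetric space $SL(3,\mathbb{R})/SO(3)$, which has non-positive sectional curvature. Since $\tilde{\Psi}_{MP}$ is a harmonic map (a critical point of $E$), the gap estimate will follow from geodesic convexity of $E$ combined with a Sobolev inequality on $\mathbb{R}^3$.

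First, for each $x\in\mathbb{R}^{3}\setminus\Gamma$, I would let $\tilde{\Psi}_{t}(x)$ denote the unique target geodesic from $\tilde{\Psi}_{MP}(x)$ at $t=0$ to $\tilde{\Psi}(x)$ at $t=1$, parameterized so that $|\partial_{t}\tilde{\Psi}_{t}|$ is constant in $t$ and equal to $d(x):=\operatorname{dist}_{SL(3,\mathbb{R})/SO(3)}(\tilde{\Psi}(x),\tilde{\Psi}_{MP}(x))$. Set $f(t)=E_{\Omega_{R}}(\tilde{\Psi}_{t})$ on a suitable exhaustion $\Omega_{R}$ of $\mathbb{R}^{3}\setminus\Gamma$ avoiding tubular neighborhoods of the axes, the horizon rod, and the region $|x|>R$. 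The standard second variation formula for the harmonic energy, together with the curvature bound $\operatorname{Sec}\le 0$ of the target, gives $f''(t)\geq 2\int_{\Omega_{R}}|\nabla d|^{2}\,dx$ (the Jacobi field corresponding to the variation is weakly contracted, and $|\partial_{t}\tilde\Psi_t|=d$ is constant in $t$). Since $\tilde{\Psi}_{MP}$ satisfies the harmonic-map equations on $\mathbb{R}^{3}\setminus\Gamma$, integration by parts represents $f'(0)$ as a pure boundary integral; the assumption $\zeta^{i}|_{\Gamma}=\zeta^{i}_{MP}|_{\Gamma}$, together with the matching asymptotics of $(\overline{U},\overline{V},W)$ with $(\overline{U}_{MP},\overline{V}_{MP},W_{MP})$ at infinity, on $\Gamma_{\pm}$, and on $H$, force $f'(0)\to 0$ as $\Omega_{R}\to\mathbb{R}^{3}\setminus\Gamma$. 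Integrating $f''$ twice in $t$ then yields
\begin{equation*}
E_{\mathbb{R}^{3}}(\tilde{\Psi})-E_{\mathbb{R}^{3}}(\tilde{\Psi}_{MP})\geq \int_{\mathbb{R}^{3}}|\nabla d|^{2}\,dx.
\end{equation*}

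Next, I would transfer this bound to $\mathcal{I}$ using \eqref{51}. The renormalization produces extra boundary integrals involving $(u-u_{MP})$ and $(v-v_{MP})$ paired against $\partial_{\nu}(h_{1}+U_{S})$ and $\partial_{\nu}(h_{2}+V_{S})$. Using the assumed decay in \eqref{F1}--\eqref{F12} at infinity, boundedness of $(\overline{U},\overline{V})$ near $\Gamma_{\pm}$, and the Schwarzschild singular profile on $H$, these contributions cancel or vanish in the limit, giving $\mathcal{I}(\Psi)-\mathcal{I}(\Psi_{MP})\geq \int_{\mathbb{R}^{3}}|\nabla d|^{2}\,dx$. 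Finally, since $d$ lies in $\dot{H}^{1}(\mathbb{R}^{3})$ by the prescribed asymptotics, the Gagliardo–Nirenberg–Sobolev inequality on $\mathbb{R}^{3}$ gives
\begin{equation*}
\Bigl(\int_{\mathbb{R}^{3}}d^{6}\,dx\Bigr)^{1/3}\leq C\int_{\mathbb{R}^{3}}|\nabla d|^{2}\,dx,
\end{equation*}
and combining with the previous step yields the stated inequality.

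The main obstacle is the careful analysis of the boundary contributions: those arising from the first-variation identity for $E$ in Step 2, and those arising from the renormalization identity \eqref{51} in Step 3. Each of the three boundary components (infinity, the axes $\Gamma_{\pm}$, and the horizon rod $H$) must be treated separately, and the singular behavior of $(U,\alpha,V)$ on $H$ modeled on Schwarzschild initial data, together with the coincidence of twist potentials on $\Gamma$, must be exploited to ensure cancellation. A secondary technical point is establishing the convexity inequality $f''(t)\geq 2\int|\nabla d|^{2}\,dx$ globally on $\mathbb{R}^{3}\setminus\Gamma$ despite the non-compactness and the fact that $\tilde{\Psi}_{MP}$ blows up on $\Gamma$; this is handled by the exhaustion $\Omega_{R}$ together with uniform tail estimates on the integrals over $\Omega_{R}^{c}$.
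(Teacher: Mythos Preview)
Your outline correctly isolates the three ingredients---geodesic convexity of the harmonic energy on a non-positively curved target, vanishing of the first variation at the harmonic map $\tilde\Psi_{MP}$, and the Sobolev inequality on $\mathbb{R}^{3}$---but there is a genuine gap in the way you pass between $E$ and $\mathcal{I}$.  The harmonic energy $E_{\mathbb{R}^{3}}(\tilde\Psi)$ is \emph{infinite} for every map in the admissible class: since $u=\overline U+U_{S}+h_{1}$ with $h_{1}=\tfrac12\log\rho$, the term $12|\nabla u|^{2}\sim 3\rho^{-2}$ near the axis is not locally integrable.  Hence your displayed inequality $E_{\mathbb{R}^{3}}(\tilde\Psi)-E_{\mathbb{R}^{3}}(\tilde\Psi_{MP})\ge\int|\nabla d|^{2}$ is $\infty-\infty$, and the subsequent step ``transfer to $\mathcal{I}$ via \eqref{51}'' cannot be performed as a difference of two identities.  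Working on an exhaustion $\Omega_{R}$ does not by itself repair this: you then need the first-variation boundary term $f'(0)=\int_{\partial\Omega_{R}}\langle\partial_{t}\tilde\Psi_{t}|_{t=0},\partial_{\nu}\tilde\Psi_{MP}\rangle$ and the renormalization boundary terms from \eqref{51} to combine and vanish in the limit.  The former involves the full initial velocity of the target geodesic in all five directions, weighted by the singular target metric (coefficients such as $e^{-6u}$ blow up on $\Gamma$); you have not established any control of this velocity near the axis, the horizon rod, or the poles $p_{\pm}$, and the coincidence $\zeta^{i}|_{\Gamma}=\zeta^{i}_{MP}|_{\Gamma}$ alone is insufficient.

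The paper avoids this difficulty by a cut-and-paste construction which you have omitted.  One builds an approximation $\Psi_{\delta,\varepsilon}$ that agrees with $\Psi_{MP}$ in the components $(\overline V,W,\zeta^{1},\zeta^{2})$ on neighborhoods of infinity, of the poles $p_{\pm}$, and of the axis, while leaving $\overline U$ untouched; three separate lemmas show $\mathcal{I}(\Psi_{\delta,\varepsilon})\to\mathcal{I}(\Psi)$.  The payoff is that on the singular set $\mathcal{A}_{\delta,\varepsilon}$ the target geodesic degenerates to a \emph{linear} interpolation in the single coordinate $\overline U$, so the second variation of $\mathcal{I}$ there is computed directly and the boundary terms from \eqref{51} are affine in $t$ (hence contribute nothing to $\tfrac{d^{2}}{dt^{2}}\mathcal{I}$).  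The first variation likewise reduces to explicit integrals of $(\overline U_{\delta,\varepsilon}-\overline U_{MP})\partial_{\nu}\overline U_{MP}$ over shrinking boundaries, which are easily seen to vanish.  Without this device, the convexity and first-variation steps in your proposal are not justified.
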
 In the following sections we establish this result. 

\section{asymptotics in Weyl coordinate}\label{sec:asymp}
In this section, we list all asymptotic behaviour of a map $\Psi$ that ensures a finite reduced energy $\mathcal{I}(\Psi)$. Inspired from the Myers-Perry map, we require that 
\begin{equation}
\overline{U},\overline{V}\in C^{0,1}(\mathbb{R}^3),\quad\text{and}\qquad \overline{U},\overline{V}=O_1(r^{-1-\kappa}),\quad \text{as}\quad r\to\infty
\end{equation} for $\kappa>0$. We require that as $r\rightarrow\infty$ the following decay occur
\begin{equation}\label{fall0}
h_1=\frac{1}{2}\log\rho+O(r^{-2}),\text{ }\text{ }\text{ }\text{ }h_2=\log\left(\tan\theta\right)+O(r^{-2}),
\end{equation}
\begin{equation}\label{fall1}
\overline{U}=O(r^{-1-\kappa}),\text{ }\text{ }\text{ }\text{ }	\overline{V}=O(r^{-1-\kappa}),\text{ }\text{ }\text{ }\text{ }W=\sqrt{\rho} O(r^{-2-\kappa}),
\end{equation}
\begin{equation}\label{fall2}
|\nabla	\overline{U}|=O(r^{-3-\kappa}),\text{ }\text{ }\text{ }\text{ }|\nabla	\overline{V}|=O(r^{-3-\kappa}),\text{ }\text{ }\text{ }\text{ }|\nabla W|=\rho^{-\frac{1}{2}} O(r^{-2-\kappa}),
\end{equation}
\begin{equation}\label{fall3}
|\nabla \zeta^{1}|=\rho\sqrt{\sin\theta} O(r^{-2-\kappa}),\text{ }\text{ }\text{ }\text{ }\text{ }\text{ }\text{ }\text{ }|\nabla \zeta^{2}|=\rho\sqrt{\cos\theta} O(r^{-2-\kappa}).
\end{equation}
Next consider asymptotic behaviour near  the rod points $p_{\pm}$. As $r_+\rightarrow 0$, we require 
\begin{equation}\label{l1}
h_1=\frac{1}{2}\log \rho,\qquad h_2=\frac{1}{2}\log\left(\frac{r_+-(z-\mathfrak{m})}{r_++(z-\mathfrak{m})}\right)
\end{equation}
\begin{equation}\label{fall1.0}
\overline{U}=O(1),\text{ }\text{ }\text{ }\text{ }\overline{V}=O(1),\text{ }\text{ }\text{ }\text{ } W=O(1),
\end{equation}
\begin{equation}\label{fall2.0}
|\nabla\overline{U}|=O(r_+^{-1}),\text{ }\text{ }\text{ }\text{ }|\nabla\overline{V}|=O(r_+^{-1}),\text{ }\text{ }\text{ }\text{ }|\nabla W|=O(r_+^{-1}),
\end{equation}
\begin{equation}\label{fall4.0}
|\nabla \zeta^{1}|=\rho^{3/2} O(r_+^{-3/2}),\quad \text{for}\quad z\geq \mathfrak{m},\qquad |\nabla \zeta^{1}|=O(1),\quad\text{for}\quad z\leq \mathfrak{m}
\end{equation}
\begin{equation}\label{fall4.31}
|\nabla \zeta^{2}|=\rho^{1/2} O(r_+^{-1/2}),\quad \text{for}\quad z\geq \mathfrak{m},\qquad |\nabla \zeta^{2}|=O(1),\quad\text{for}\quad z\leq \mathfrak{m}
\end{equation} As $r_-\rightarrow 0$, we require 
\begin{equation}
h_1=\frac{1}{2}\log \rho,\qquad h_2=\frac{1}{2}\log\left(\frac{r_+-(z-\mathfrak{m})}{r_++(z-\mathfrak{m})}\right)
\end{equation}
\begin{equation}\label{fall1.01}
\overline{U}=O(1),\text{ }\text{ }\text{ }\text{ }\overline{V}=O(1),\text{ }\text{ }\text{ }\text{ } W=O(1),
\end{equation}
\begin{equation}\label{fall2.01}
|\nabla\overline{U}|=O(r_-^{-1}),\text{ }\text{ }\text{ }\text{ }|\nabla\overline{V}|=O(r_-^{-1}),\text{ }\text{ }\text{ }\text{ }|\nabla W|=O(r_-^{-1}),
\end{equation}
\begin{equation}\label{fall4.01}
|\nabla \zeta^{1}|=O(1),\quad \text{for}\quad z\leq-\mathfrak{m},\qquad |\nabla \zeta^{1}|=\rho^{3/2} O(r_-^{-3/2}),\quad\text{for}\quad z\geq- \mathfrak{m}
\end{equation}
\begin{equation}\label{fall4.3}
|\nabla \zeta^{2}|=O(1),\quad \text{for}\quad z\leq- \mathfrak{m},\qquad |\nabla \zeta^{2}|=\rho^{1/2} O(r_-^{-1/2}),\quad\text{for}\quad z\geq -\mathfrak{m}
\end{equation}By integrating \eqref{fall4.0} on lines perpendicular to the axis at $p_\pm$ and using the fact that $(\zeta^{i}-\zeta_{MP}^{i})|_{\Gamma}=0$, we have 
\begin{equation}\label{z1}
\zeta^1-\zeta^1_{MP}=O(\rho^{5/2}r_{\pm}^{-3/2}),\qquad \quad \text{as $r_\pm\to 0$ and $|z|\geq \mathfrak{m}$}
\end{equation} 
\begin{equation}\label{z2}
\zeta^2-\zeta^2_{MP}=O(\rho^{3/2}r_{\pm}^{-1/2}),\qquad \quad \text{as $r_\pm\to 0$ and $|z|\geq \mathfrak{m}$}
\end{equation}
On horizon rod $|z|\leq\mathfrak{m}$, the potentials do not match and we integrate along the radial line emanating from the poles $p_\pm$ and we have
\begin{equation}\label{zih}
\zeta^i-\zeta^i_{MP}=O(r_{\pm}),\qquad \quad \text{as $r_\pm\to 0$ and $|z|\leq \mathfrak{m}$}
\end{equation}

Next, the asymptotics on approach to the symmetry axes as $\rho\rightarrow 0$, we have 
\begin{equation}\label{fall0.1}
h_1=\frac{1}{2}\log\rho,\qquad \text{and}\quad h_2=\text{sgn}(z-\mathfrak{m})\log\rho+O(1)
\end{equation}Furthermore, for $|z|\leq \mathfrak{m}$, the functions are required to satisfy
\begin{equation}\label{fall1.1}
\overline{U}=O(1),\text{ }\text{ }\text{ }\text{ }{}\overline{V}=O(1),\text{ }\text{ }\text{ }\text{ }W=O(1),
\end{equation}
\begin{equation}\label{fall2.1}
|\nabla\overline{U}|=O(1),\text{ }\text{ }\text{ }\text{ }|\nabla\overline{V}|=O(1),\text{ }\text{ }\text{ }\text{ }|\nabla W|=O(1),
\end{equation}
\begin{equation}\label{fall4.1}
|\nabla \zeta^{1}|=O(1),\text{ }\text{ }\text{ }\text{ }\text{ }\text{ }\text{ }\text{ }|\nabla \zeta^{2}|=O(1).
\end{equation}
and for $|z|>\mathfrak{m}$, we have 
\begin{equation}\label{fall1.2}
\overline{U}=O(1),\text{ }\text{ }\text{ }\text{ }\overline{V}=O(1),\text{ }\text{ }\text{ }\text{ }W=O(\rho^{\frac{1}{2}}),
\end{equation}
\begin{equation}\label{fall2.2}
|\nabla\overline{V}|=O(1),\text{ }\text{ }\text{ }\text{ }|\nabla\overline{V}|=O(1),\text{ }\text{ }\text{ }\text{ }|\nabla W|=O(\rho^{-\frac{1}{2}}),
\end{equation}
\begin{equation}\label{fall4.2}
|\nabla \zeta^{1}|=\sqrt{\sin\theta}O(\rho),\text{ }\text{ }\text{ }\text{ }\text{ }\text{ }\text{ }\text{ }|\nabla \zeta^{2}|=\sqrt{\cos\theta}O(\rho).
\end{equation}By integrating \eqref{fall4.0} on lines perpendicular to the axis and with the fact that $(\zeta^{i}-\zeta_{MP}^{i})|_{\Gamma}=0$, we have
\begin{equation}
\zeta^i=\text{constant}+O(\rho^2),\qquad \text{for $|z|>\mathfrak{m}$}.
\end{equation}

\section{The Cut-and-Paste Argument}
We now prove Theorem \ref{infimum} using the convexity of the harmonic energy $E$ for maps $\Psi$ for nonpositively curved target space metric under geodesic deformations. This requires a detailed analysis of the reduced energy $I(\Psi)$ on different regions for singular harmonic maps using a cut-and-paste argument. First, we approximate the map $\Psi$ with an associated map $\Psi_{\delta,\varepsilon}$ which agree with $\Psi_{MP}$ on certain regions. In particular, let $\delta,\varepsilon>0$ be
small parameters and define sets $\Omega_{\delta,\varepsilon}=\{\delta<r_\pm;\,r<2/\delta;
\rho>\varepsilon\}$ and $\mathcal{A}_{\delta,\varepsilon}=B_{2/\delta}\setminus
\Omega_{\delta,\varepsilon}$, where $B_{2/\delta}$ is the ball of radius $2/\delta$ centered at the origin. Then, the approximate map $\Psi_{\delta,\varepsilon}=(\overline{U}_{\delta,\varepsilon},\overline{V}_{\delta,\varepsilon},W_{\delta,\varepsilon},\zeta^i_{\delta,\varepsilon})$ must satisfy the following properties
\begin{equation}\label{54}
	\operatorname{supp}(\overline{U}_{\delta,\varepsilon}-\overline{U}_{MP})\subset B_{2/\delta},\text{ }\text{ }\text{ }\text{ }\text{ }
	\operatorname{supp}(\overline{V}_{\delta,\varepsilon}-\overline{V}_{MP},W_{\delta,\varepsilon}-W_{MP},
	\zeta^{1}_{\delta,\varepsilon}-\zeta^{1}_{MP},\zeta^{2}_{\delta,\varepsilon}-\zeta^{2}_{MP})\subset \Omega_{\delta,\varepsilon}.
\end{equation}
Let $\tilde{\Psi}^{t}_{\delta,\varepsilon}$, $t\in[0,1]$, be a geodesic in $SL(3,\mathbb{R})/SO(3)$ which connects
$\tilde{\Psi}^{1}_{\delta,\varepsilon}=\tilde{\Psi}_{\delta,\varepsilon}$ and $\tilde{\Psi}^{0}_{\delta,\varepsilon}=\tilde{\Psi}_{MP}$. Then the properties \eqref{54} imply that $\tilde{\Psi}^{t}_{\delta,\varepsilon}\equiv\tilde{\Psi}_{MP}$ outside $B_{2/\delta}$ and
$(\overline{V}^{t}_{\delta,\varepsilon},{W}^{t}_{\delta,\varepsilon},\zeta^{i,t}_{\delta,\varepsilon})=(\overline{V}_{MP},{W}_{MP},\zeta^{i}_{MP})$ for $i=1,2$,
in a neighborhood of $\mathcal{A}_{\delta,\varepsilon}$. Furthermore, we impose that $\overline{U}^{t}_{\delta,\varepsilon}=\overline{U}_{MP}+t(\overline{U}_{\delta,\varepsilon}-\overline{U}_{MP})$ and $\overline{V}^{t}=\overline{V}_{MP}$ on these
regions. 

Combining the convexity of the harmonic energy $E$, the linear behaviour of $\overline{U}^{t}_{\delta,\varepsilon}$ in $t$, and constancy of $\overline{V}^{t}_{\delta,\varepsilon}$, we obtain the following inequality,  similar to that found  in \cite{Alaeemassam}:
\begin{equation}
\frac{d^{2}}{dt^{2}}\mathcal{I}({\Psi^{t}}_{\delta,\varepsilon})
\geq 2\int_{\mathbb{R}^{3}}|\nabla\operatorname{dist}_{SL(3,\mathbb{R})/SO(3)}
({\Psi}_{\delta,\varepsilon},{\Psi}_{MP})|^{2}dx
\end{equation}Integrating this and using the fact that the map $\Psi_M$ is a critical point of the reduced energy and applying a Sobolev inequality, we obtain the gap bound in Theorem \ref{infimum} as $\delta,\varepsilon\to 0$. 

To construct the approximate map $\Psi_{\delta,\varepsilon}$ introduced above,  we define smooth cut-off functions, which only take values in the interval $[0,1]$, by
\begin{equation}\label{65}
\overline{\varphi}_{\delta}=\begin{cases}
1 & \text{ if $r\leq\frac{1}{\delta}$,} \\
|\nabla\overline{\varphi}_{\delta}|\leq 2\delta^2 &
\text{ if $\frac{1}{\delta}<r<\frac{2}{\delta}$,} \\
0 & \text{ if $r\geq\frac{2}{\delta}$,} \\
\end{cases}
\end{equation}
\begin{equation}\label{66}
\varphi_{\delta}=\begin{cases}
0 & \text{ if $r_\pm\leq\delta$,} \\
|\nabla\varphi_{\delta}|\leq \frac{2}{\delta} &
\text{ if $\delta<r_\pm<2\delta$,} \\
1 & \text{ if $r_\pm\geq2\delta$,} \\
\end{cases}
\end{equation}
and
\begin{equation}\label{67}
\phi_{\varepsilon}=\begin{cases}
0 & \text{ if $\rho\leq\varepsilon$,} \\
\frac{\log(\rho/\varepsilon)}{\log(\sqrt{\varepsilon}/
	\varepsilon)} &
\text{ if $\varepsilon<\rho<\sqrt{\varepsilon}$,} \\
1 & \text{ if $\rho\geq\sqrt{\varepsilon}$.} \\
\end{cases}
\end{equation}
First, we deal with the asymptotically flat region. Let
\begin{equation}\label{681}
	\overline{F}_{\delta}({\Psi})={\Psi}_{MP}
	+\overline{\varphi}_{\delta}({\Psi}-{\Psi}_{MP})
\end{equation}
so that $\overline{F}_{\delta}({\Psi})={\Psi}_{MP}$ on $\mathbb{R}^{3}\setminus B_{2/\delta}$. 

\begin{lemma}\label{cutandpaste1}
	$\lim_{\delta\rightarrow 0}\mathcal{I}(\overline{F}_{\delta}({\Psi}))=\mathcal{I}({\Psi}).$
\end{lemma}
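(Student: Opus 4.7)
The plan is to exploit the support structure of the cut-off: $\overline{F}_\delta(\Psi)$ agrees with $\Psi$ on $B_{1/\delta}$ and with $\Psi_{MP}$ on $\mathbb{R}^3\setminus B_{2/\delta}$, so the difference of the reduced energies localizes to the transition annulus $\mathcal{A}_\delta = B_{2/\delta}\setminus B_{1/\delta}$ together with tail contributions. Concretely, I would write
\begin{equation*}
\mathcal{I}(\overline{F}_\delta(\Psi)) - \mathcal{I}(\Psi)
= \bigl[\mathcal{I}_{\mathcal{A}_\delta}(\overline{F}_\delta(\Psi)) - \mathcal{I}_{\mathcal{A}_\delta}(\Psi)\bigr]
+ \bigl[\mathcal{I}_{\mathbb{R}^3\setminus B_{2/\delta}}(\Psi_{MP}) - \mathcal{I}_{\mathbb{R}^3\setminus B_{2/\delta}}(\Psi)\bigr]
\end{equation*}
and show each of the four constituent pieces vanishes as $\delta\to 0$.

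\textbf{The three easy pieces.} For $\mathcal{I}_{\mathbb{R}^3\setminus B_{2/\delta}}(\Psi)$, $\mathcal{I}_{\mathbb{R}^3\setminus B_{2/\delta}}(\Psi_{MP})$ and $\mathcal{I}_{\mathcal{A}_\delta}(\Psi)$, I would invoke finiteness of the total reduced energy: the asymptotics \eqref{fall1}--\eqref{fall3} (together with the axis and rod-point asymptotics in Section~\ref{sec:asymp}) are tailored precisely so that each integrand in $\mathcal{I}$ lies in $L^1(\mathbb{R}^3)$. Since the Myers-Perry map likewise satisfies these asymptotics, absolute continuity of the Lebesgue integral forces these three pieces to vanish as $\delta\to 0$.

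\textbf{The annulus piece for the cut-off map.} The main work is bounding $\mathcal{I}_{\mathcal{A}_\delta}(\overline{F}_\delta(\Psi))$. For each field $\Phi\in\{\overline{U},\overline{V},W,\zeta^1,\zeta^2\}$ I would expand
\begin{equation*}
\nabla\Phi_\delta = \nabla\Phi_{MP} + \overline{\varphi}_\delta\nabla(\Phi-\Phi_{MP}) + (\Phi-\Phi_{MP})\nabla\overline{\varphi}_\delta.
\end{equation*}
The first two terms are pointwise dominated (up to absolute constants) by the integrands of $\mathcal{I}_{\mathcal{A}_\delta}(\Psi_{MP})+\mathcal{I}_{\mathcal{A}_\delta}(\Psi)$, which are handled by the previous step. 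The novel contribution is the last one; combining $|\nabla\overline{\varphi}_\delta| = O(\delta^2)$ from \eqref{65} with the field decays (e.g.\ $|\overline{U}-\overline{U}_{MP}|=O(r^{-1-\kappa})=O(\delta^{1+\kappa})$ on $\mathcal{A}_\delta$, with analogous $\sqrt{\rho}$-weighted estimates for $W$ and the bounds \eqref{z1}--\eqref{z2} for the twist potentials) together with the volume estimate $\operatorname{vol}(\mathcal{A}_\delta)=O(\delta^{-6})$ (which follows from $\sqrt{\rho^2+z^2}\sim r^2$), each of these integrals is of order $O(\delta^{2\kappa})\to 0$.

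\textbf{Main obstacle.} The principal technical obstacle lies in the twist-potential integrands $e^{-6h_1-6U\pm h_2\pm V}|\nabla\zeta^i|^2$ and the coupled term $e^{-6h_1-6U+h_2+V}\cosh W|e^{-h_2-V}\tanh W\nabla\zeta^1-\nabla\zeta^2|^2$. The coefficient $e^{-6h_1}\sim\rho^{-3}$ is singular on the axis and the exponentials involving $U,V$ must be kept under control after substitution of $\Phi_\delta$. I would verify that, since $\overline{\varphi}_\delta\in[0,1]$ and $\overline{U},\overline{V}\in C^{0,1}(\mathbb{R}^3)$, the quantities $U_\delta = U_S+\overline{U}_{MP}+\overline{\varphi}_\delta(\overline{U}-\overline{U}_{MP})$ and $V_\delta$ remain a bounded convex combination of the original fields, so the exponential coefficient in the new integrand is uniformly comparable on $\mathcal{A}_\delta$ to that in the original. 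The $\rho$- and $\sqrt{\sin\theta},\sqrt{\cos\theta}$-weighted decay of $|\nabla\zeta^i|$ from \eqref{fall3} then balances the axis singularity of $e^{-6h_1}$, and the annulus contribution from the twist sector is again absorbed into an $O(\delta^{2\kappa})$ remainder, concluding the proof.
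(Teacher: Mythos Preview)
Your argument is correct and follows exactly the standard cut-and-paste decomposition that the paper defers to in \cite[Lemma~4.2]{Alaeemassam}: split the reduced energy into the annulus and exterior pieces, control the exterior by finite energy of $\Psi$ and $\Psi_{MP}$, and on the annulus expand $\nabla\Phi_\delta$ and absorb the cut-off contributions using the asymptotic decay and the gradient bound $|\nabla\overline{\varphi}_\delta|=O(\delta^2)$.

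One small slip worth flagging: the bounds \eqref{z1}--\eqref{z2} you cite for $|\zeta^i-\zeta^i_{MP}|$ are stated only as $r_\pm\to 0$, not in the asymptotic region where $\mathcal{A}_\delta$ lives. On the annulus the relevant estimate is obtained instead by integrating \eqref{fall3} along lines from the axis (using $\zeta^i|_{\Gamma_\pm}=\zeta^i_{MP}|_{\Gamma_\pm}$), which yields $|\zeta^i-\zeta^i_{MP}|=O(\rho^2 r^{-2-\kappa})$; this $\rho$-weighted bound is precisely what is needed to tame the $e^{-6h_1}\sim\rho^{-3}$ coefficient in the twist-sector integrands and recover the $O(\delta^{2\kappa})$ remainder you claim.
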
 The proof follows from \cite[Lemma 4.2]{Alaeemassam}. The next regions are neighbourhoods of the rod points $p_\pm$. Let
\begin{equation}\label{68}
	F_{\delta}({\Psi})=
	(\overline{U},{V}_{\delta},W_{\delta}
	,\zeta^{1}_{\delta},\zeta^{2}_{\delta}),
\end{equation}
where
\begin{equation}
	(\overline{V}_{\delta},W_{\delta}
	,\zeta^{1}_{\delta},\zeta^{2}_{\delta})=(\overline{V}_{MP},W_{MP},\zeta^{1}_{MP},\zeta^{2}_{MP})
	+\varphi_{\delta}(\overline{V}-\overline{V}_{MP},W-W_{MP},\zeta^{1}-\zeta^{1}_{MP},\zeta^{2}-\zeta^{2}_{MP}),
\end{equation}
so that $F_{\delta}({\Psi})={\Psi}_{MP}$ on $B_{\delta}(p_+)\cup B_{\delta}(p_-)$.

\begin{lemma}\label{cutandpaste2}
	$\lim_{\delta\rightarrow 0}\mathcal{I}(F_{\delta}({\Psi}))=\mathcal{I}({\Psi}).$ This also holds if ${\Psi}\equiv{\Psi}_{MP}$ outside of $B_{2/\delta}$.
\end{lemma}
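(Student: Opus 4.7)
The approach is a three-region decomposition of $\mathbb{R}^{3}$. Let $E_{\delta}=\mathbb{R}^{3}\setminus (B_{2\delta}(p_+)\cup B_{2\delta}(p_-))$ be the exterior, $B_\delta(p_\pm)$ the inner balls where $\varphi_\delta\equiv 0$, and $A_\delta^\pm=\{\delta\le r_\pm\le 2\delta\}$ the transition annuli. On $E_\delta$ one has $\varphi_\delta\equiv 1$, so $F_\delta(\Psi)=\Psi$ there and the integrands agree pointwise. The contribution of $\Psi$ to $\mathcal{I}$ over the shrinking set $B_{2\delta}(p_+)\cup B_{2\delta}(p_-)$ vanishes by absolute continuity of the Lebesgue integral, since $\mathcal{I}(\Psi)<\infty$. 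The task therefore reduces to showing $\mathcal{I}|_{B_{2\delta}(p_\pm)}(F_\delta(\Psi))\to 0$ as $\delta\to 0$.

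Inside $B_\delta(p_\pm)$, the map $F_\delta(\Psi)$ agrees with $\Psi_{MP}$ in the components $(\overline{V},W,\zeta^1,\zeta^2)$ but retains the original $\overline{U}$. Since both $\overline{U}$ and $\overline{U}_{MP}$ are $O(1)$ near $p_\pm$, the multiplicative weight $e^{6(\overline{U}_{MP}-\overline{U})}$ appearing in the weighted $\zeta^i$ terms is bounded, so the integrand of $F_\delta(\Psi)$ on $B_\delta(p_\pm)$ is pointwise dominated by a uniform constant times the integrand of $\Psi_{MP}$ plus $12|\nabla\overline{U}|^{2}$. Both contributions are integrable over $\mathbb{R}^{3}$ and hence vanish on $B_\delta(p_\pm)$ by absolute continuity. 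On the transition annulus $A_\delta^\pm$, for each component $f_\delta=f_{MP}+\varphi_\delta(f-f_{MP})$ with $f\in\{\overline{V},W,\zeta^1,\zeta^2\}$, the expansion
\[
|\nabla f_\delta|^{2}\le 3|\nabla f_{MP}|^{2}+3|\nabla f|^{2}+3(f-f_{MP})^{2}|\nabla\varphi_\delta|^{2}
\]
reduces the problem to the third term, since the first two again vanish on $A_\delta^\pm$ by absolute continuity. Using $|\nabla\varphi_\delta|\le 2/\delta$ together with the volume bound $|A_\delta^\pm|\le C\delta^{3}$, the unweighted $\overline{V}$ and $W$ cutoff contributions are $O(\delta)$ from the bounds $|\overline{V}-\overline{V}_{MP}|,|W-W_{MP}|=O(1)$ in \eqref{fall1.0} and \eqref{fall1.01}.

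For the weighted $\zeta^i$ terms, I will combine the sharper pointwise estimates $|\zeta^i-\zeta^i_{MP}|=O(r_\pm)$ on the horizon side \eqref{zih} and $|\zeta^1-\zeta^1_{MP}|=O(\rho^{5/2}r_\pm^{-3/2})$, $|\zeta^2-\zeta^2_{MP}|=O(\rho^{3/2}r_\pm^{-1/2})$ on the axis side \eqref{z1}, \eqref{z2}, with the observation that the singular factor $\rho^{-3}$ in $e^{-6h_1}$ is absorbed by $e^{-6U_S}\sim\rho^{3}$ coming from the Schwarzschild horizon asymptotics \eqref{F7}. The remaining factors $e^{\pm h_2\pm V}/\cosh W$ evaluated on $F_\delta(\Psi)$ are bounded on $A_\delta^\pm$ by those evaluated on $\Psi_{MP}$ up to a uniform multiplicative constant, since $(\overline{V}_\delta,W_\delta)$ is a convex combination of $(\overline{V},W)$ and $(\overline{V}_{MP},W_{MP})$ which are each $O(1)$. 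Combined with the sharper $\zeta^i$ decay, the cutoff contribution becomes of order $\delta^\gamma$ for some $\gamma>0$. The main obstacle I expect is the cross term $|e^{-h_2-V}\tanh W\,\nabla\zeta^1-\nabla\zeta^2|^{2}$, whose singular factor $e^{-h_2}$ blows up on the axis $\Gamma_+$ near $p_+$; my plan is to expand the square, apply Cauchy--Schwarz to reduce to bounds on the individual weighted $|\nabla\zeta^i_\delta|^{2}$ integrals, and use the axis matching of $F_\delta(\Psi)$ with $\Psi_{MP}$ to tame the singular weight against the decay of $\zeta^i-\zeta^i_{MP}$. The final claim that the statement persists when $\Psi\equiv\Psi_{MP}$ outside $B_{2/\delta}$ is immediate, since all modifications in $F_\delta$ are localized near $p_\pm$ and the estimates above are insensitive to the large-$r$ behavior.
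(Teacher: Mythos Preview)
Your proposal is correct and follows essentially the same three-region decomposition and estimates as the paper's proof: dominated convergence on the exterior and inner balls, and on the transition annuli the expansion $|\nabla f_\delta|^2\le C(|\nabla f|^2+|\nabla f_{MP}|^2+r_\pm^{-2}(f-f_{MP})^2)$ together with \eqref{z1}, \eqref{z2}, \eqref{zih} for the weighted $\zeta^i$ cutoff terms. One small imprecision: the absorption $e^{-6U_S}\sim\rho^{3}$ you invoke from \eqref{F7} holds only on the horizon side $|z|<\mathfrak{m}$; on the axis side $U_S$ is bounded and it is instead the sharper $\rho$-decay of $\zeta^i-\zeta^i_{MP}$ in \eqref{z1}--\eqref{z2} (which you already cite) that compensates $e^{-6h_1}=\rho^{-3}$, so the argument still closes.
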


\begin{proof}
	Write
	\begin{equation}\label{080}
		\mathcal{I}(F_{\delta}({\Psi}))
		=\sum_{\pm}\left[\mathcal{I}_{r_\pm\leq\delta}(F_{\delta}({\Psi}))
		+\mathcal{I}_{\delta< r_\pm<2\delta}(F_{\delta}({\Psi}))\right]
		+\mathcal{I}_{r_\pm\geq 2\delta}(F_{\delta}({\Psi})),
	\end{equation}where $r_\pm\geq 2\delta$ is outside the open balls $B_{2\delta}(p_+)\cup B_{2\delta}(p_-)$. Observe that by the dominated convergence theorem (DCT)
	\begin{equation}\label{081}
		\mathcal{I}_{r_\pm\geq2\delta}(F_{\delta}({\Psi}))
		=\mathcal{I}_{r_\pm\geq2\delta}({\Psi})
		\rightarrow \mathcal{I}({\Psi})
	\end{equation}
	Moreover
	\begin{align}
		\begin{split}
			\mathcal{I}_{r_\pm\leq\delta}({\Psi})&=\int_{r_\pm\leq\delta}12|\nabla\overline{U}|^{2}
			+|\nabla \overline{V}_{MP}|^{2}+|\nabla W_{MP}|^{2}
			\\
			&+\int_{r_\pm\leq\delta}\sinh^{2}W_{MP}|\nabla({V}_{MP}+h_2)|^{2}+\frac{e^{-6h_1-6{U}-h_2-{V}_{MP}}}{\cosh W_{MP}}|\nabla \zeta_{MP}^{1}|^{2}\\
			&+\int_{r_\pm\leq\delta}
			e^{-6h_1-6{U}+h_2+{V}_{MP}}\cosh W_{MP}
			\left|\nabla \zeta_{MP}^{2}-e^{-h_2-{V}_{MP}}\tanh W_{MP}\nabla \zeta_{MP}^{1}\right|^{2},
		\end{split}
	\end{align}
	where the first two term in the first line converge to zero again by the DCT. The remaining terms converge to zero by the reduced energy of $\Psi_{MP}$ and $e^{-{U}}\leq Ce^{-{U}_{MP}}$ and $e^{{V}}\leq Ce^{{V}_{MP}}$ near the rod points for some positive constant $C$.
	
	Now consider
	\begin{align}\label{070}
		\begin{split}
			\mathcal{I}_{\delta< r_\pm<2\delta}(F_{\delta}({\Psi}))
			=&\underbrace{\int_{\delta< r_\pm<2\delta}
				12|\nabla \overline{U}|^{2}}_{I_{1}}
			+\underbrace{\int_{\delta< r_\pm<2\delta}
				|\nabla V_{\delta}|^{2}+\int_{\delta< r_\pm<2\delta}
				|\nabla W_{\delta}|^{2}}_{I_{2}}\\
			&+\underbrace{\int_{\delta< r_\pm<2\delta}
				\sinh^{2}W_{\delta}|\nabla(V_{\delta}+h_2)|^{2}}_{I_{3}}
			+\underbrace{\int_{\delta< r_\pm<2\delta}
				\frac{e^{-6h_{1}-6{U}-h_2-{V}_{\delta}}}{\cosh W_{\delta}}
				|\nabla \zeta^{1}_{\delta}|^{2}}_{I_{4}}\\
			&+\underbrace{\int_{\delta< r_\pm<2\delta}
				e^{-6h_1-6{U}+h_2+{V}_{\delta}}\cosh W_{\delta}
				|\nabla \zeta^{2}_{\delta}-e^{-V_{\delta}}\cot\theta \tanh W_{\delta}
				\nabla \zeta^{1}_{\delta}|^{2}}_{I_{5}}.
		\end{split}
	\end{align}where $V_{\delta}=\overline{V}_{\delta}+V_S$. By the DCT, $I_1\to 0$. By expanding terms in $V_{\delta}$ and $W_{\delta}$, we have 
	\begin{equation}\label{00072}
		I_{2}\leq C\int_{\delta< r_\pm<2\delta}
		\left(|\nabla \overline{V}|^{2}+|\nabla \overline{V}_{MP}|^{2}+|\nabla W|^{2}+|\nabla W_{MP}|^{2}
		+\frac{1}{r_\pm^2}\underbrace{(\overline{V}-\overline{V}_{MP})^{2}}_{O(1)}+\frac{1}{r_\pm^2}\underbrace{(W-W_{MP})^{2}}_{O(1)}\right),
	\end{equation}The first four terms converge to zero by the DCT and finite energy of $\Psi$ and $\Psi_{MP}$. The last two terms converge to zero by the DCT.  Using $\sinh^{2}W_{\delta}=O(1)$ and expand terms in $I_{3}$, we obtain
	\begin{equation}
		I_{3}\leq CI_2+C\int_{\delta< r_\pm<2\delta}
		\sinh^{2}W_{MP}|\nabla ({V}_{MP}+h_2)|^{2}
	\end{equation}where ${V}_{MP}=\overline{V}_{MP}+V_S$. Since the last term is bounded by the finite energy of $\Psi_{MP}$ and $I_2\to0$, the DCT implies that $I_3\to 0$. For $I_4$, we have the following estimates. 
	\begin{equation}
		I_{4}\leq C\int_{\delta< r_\pm<2\delta} \left(\frac{e^{-6h_1-6{U}-h_2-{V}}}{\cosh W}|\nabla \zeta^{1}|^{2}+
		\frac{e^{-6h_1-6{U}_{MP}-h_2-{V}_{MP}}}{\cosh W_{MP}}|\nabla \zeta^{1}_{MP}|^{2}
		+\frac{e^{-6h_1-6{U}_{MP}-h_2-{V}_{MP}}}{r_{\pm}^2\cosh W_{MP}}(\zeta^{1}-\zeta^{1}_{MP})^{2}\right)
	\end{equation} As above, the first two terms are bounded by finite energy of maps $\Psi$ and $\Psi_{MP}$ and they converge to zero by the DCT. The last term also converges to zero using equation \eqref{z1} and \eqref{zih} and the DCT.  Similar reasoning implies $I_5\to 0$.
\end{proof}

Consider now the cylindrical regions around the axis $\Gamma$ and away from the origin given by
\begin{equation}\label{91}
	\mathcal{C}_{\delta,\varepsilon}=
	\{\rho\leq\varepsilon\}\cap
	\{\delta\leq r_\pm\,;r\leq 2/\delta\},
\end{equation}
\begin{equation}\label{91.1}
	\mathcal{W}^1_{\delta,\varepsilon}=
	\{\varepsilon\leq \rho\leq\sqrt{\varepsilon}\}\cap
	\{\delta\leq r_\pm\,;r\leq 2/\delta\,;|z|>\mathfrak{m}\},
\end{equation} and
\begin{equation}\label{91.2}
\mathcal{W}^2_{\delta,\varepsilon}=
\{\varepsilon\leq \rho\leq\sqrt{\varepsilon}\}\cap
\{\delta\leq r_\pm\,;r\leq 2/\delta\,;|z|<\mathfrak{m}\}.
\end{equation}
Let
\begin{equation}\label{92}
	G_{\varepsilon}(\overline{\Psi})=(\overline{U},V_{\varepsilon},
	W_{\varepsilon},\zeta^{1}_{\varepsilon},\zeta^{2}_{\varepsilon})
\end{equation}
where
\begin{equation}\label{93}
	(V_{\varepsilon},
	W_{\varepsilon},\zeta^{1}_{\varepsilon},\zeta^{2}_{\varepsilon})=
	(\overline{V}_{MP},W_{MP},\zeta^{1}_{MP},\zeta^{2}_{MP})
	+\phi_{\varepsilon}(\overline{V}-\overline{V}_{MP},W-W_{MP},\zeta^{1}-\zeta^{1}_{MP},\zeta^{2}-\zeta^{2}_{MP}),
\end{equation}
so that $G_{\varepsilon}(\overline{\Psi})=\overline{\Psi}_{0}$ on $\rho\leq\varepsilon$.

\begin{lemma}\label{cutandpaste3}
	Fix $\delta>0$ and suppose that ${\Psi}\equiv{\Psi}_{MP}$ on $B_{\delta}(p_+)\cup B_{\delta}(p_-)$. Then
	$\lim_{\varepsilon\rightarrow 0}\mathcal{I}(G_{\varepsilon}({\Psi}))=\mathcal{I}({\Psi})$. This also holds if
	${\Psi}\equiv\overline{\Psi}_{MP}$ outside $B_{2/\delta}$.
\end{lemma}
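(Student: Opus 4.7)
I would mirror the three-region argument of Lemma \ref{cutandpaste2}, with the logarithmic axis cut-off $\phi_\varepsilon$ playing the role of $\varphi_\delta$. Split
\begin{equation*}
\mathcal{I}(G_\varepsilon(\Psi)) = \mathcal{I}_{\rho\geq\sqrt\varepsilon}(G_\varepsilon(\Psi)) + \mathcal{I}_{\varepsilon<\rho<\sqrt\varepsilon}(G_\varepsilon(\Psi)) + \mathcal{I}_{\rho\leq\varepsilon}(G_\varepsilon(\Psi)).
\end{equation*}
On the outermost piece $\phi_\varepsilon\equiv 1$, so $G_\varepsilon(\Psi)=\Psi$ and the integral tends to $\mathcal{I}(\Psi)$ by dominated convergence as the set $\{\rho\geq\sqrt\varepsilon\}$ exhausts $\mathbb{R}^3\setminus\Gamma$. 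On the innermost piece $\{\rho\leq\varepsilon\}$ every component of $G_\varepsilon(\Psi)$ coincides with its Myers--Perry counterpart except $\overline{U}$, which is kept from $\Psi$. Since $\overline{U}$ and $\overline{U}_{MP}$ are globally bounded by the asymptotics \eqref{fall1} and \eqref{fall1.1}--\eqref{fall1.2} together with the hypothesis on $B_\delta(p_\pm)$, the exponential $e^{-6U}=e^{-6(U_S+\overline U)}$ is uniformly comparable to $e^{-6U_{MP}}$. Each integrand of $\mathcal{I}_{\rho\leq\varepsilon}(G_\varepsilon(\Psi))$ is therefore pointwise dominated by a constant multiple of the corresponding integrand of $\mathcal{I}(\Psi_{MP})$, plus the integrable $|\nabla\overline U|^2$ term from $\mathcal{I}(\Psi)$; dominated convergence over the shrinking set then gives $\mathcal{I}_{\rho\leq\varepsilon}(G_\varepsilon(\Psi))\to 0$.

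The crux is the transition cylinder $\{\varepsilon<\rho<\sqrt\varepsilon\}$, further decomposed into $\mathcal{W}^1_{\delta,\varepsilon}$ (axis portion $|z|>\mathfrak{m}$) and $\mathcal{W}^2_{\delta,\varepsilon}$ (horizon-rod portion $|z|<\mathfrak{m}$). The pivotal quantitative bound is
\begin{equation*}
|\nabla\phi_\varepsilon|\leq \frac{2}{\rho|\log\varepsilon|},\qquad \int_{\{\varepsilon<\rho<\sqrt\varepsilon\}\cap\{|z|\leq L\}}|\nabla\phi_\varepsilon|^2\,dx\leq \frac{C(L)}{|\log\varepsilon|}\longrightarrow 0.
\end{equation*}
Expanding each of the five integrands of $\mathcal{I}$ via the definitions \eqref{93} of $V_\varepsilon,W_\varepsilon,\zeta^i_\varepsilon$ produces two kinds of contributions: \emph{pure} terms built only from $\nabla\Psi$ and $\nabla\Psi_{MP}$, which vanish by DCT since the region of integration shrinks and each such term is dominated by an integrand of the finite reduced energies of $\Psi$ and $\Psi_{MP}$; and \emph{cross} terms of the schematic form $(\mathcal F-\mathcal F_{MP})^2|\nabla\phi_\varepsilon|^2$ for each $\mathcal F\in\{\overline V,W,\zeta^1,\zeta^2\}$. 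On $\mathcal{W}^1_{\delta,\varepsilon}$ the potentials match on $\Gamma_\pm$, so transverse integration of \eqref{fall4.2} yields $\zeta^i-\zeta^i_{MP}=O(\rho^2)$, while $\overline V-\overline V_{MP}=O(1)$ and $W-W_{MP}=O(\rho^{1/2})$. On $\mathcal{W}^2_{\delta,\varepsilon}$ each difference is only $O(1)$, but the region sits at positive distance from both $p_\pm$ and $\Gamma_\pm$, so all weight factors $e^{-6h_1-6U\pm h_2\pm V}/\cosh W$ are uniformly bounded there. In every case the cross term is absorbed by the uniform supremum of the difference times $\int|\nabla\phi_\varepsilon|^2\,dx = O(1/|\log\varepsilon|)$.

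The anticipated main obstacle is $\mathcal{W}^2_{\delta,\varepsilon}$: the twist-potential differences there do not decay to zero, so the entire smallness must come from the cut-off, and one must verify uniform boundedness of the weights multiplying $|\nabla\zeta^i|^2$ on this compact slab using the explicit form of $h_1,h_2$ and the axis asymptotics \eqref{fall1.1}--\eqref{fall4.1}. Once this uniform estimate is in place, the three regimes combine to yield $\mathcal{I}(G_\varepsilon(\Psi))\to\mathcal{I}(\Psi)$. The final clause of the lemma --- that the same conclusion holds when additionally $\Psi\equiv\Psi_{MP}$ outside $B_{2/\delta}$ --- is automatic, since that extra hypothesis only further restricts the support of $\Psi-\Psi_{MP}$ and hence of every cross term.
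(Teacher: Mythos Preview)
Your proposal is correct and takes essentially the same approach as the paper: the same three-region decomposition with the transition annulus further split into $\mathcal{W}^1_{\delta,\varepsilon}$ and $\mathcal{W}^2_{\delta,\varepsilon}$, the same key logarithmic estimate $\int_{\mathcal{W}^2_{\delta,\varepsilon}}|\nabla\phi_\varepsilon|^2\,dx=O(1/|\log\varepsilon|)$, and the same crucial observation that the exponential weights $e^{-6h_1-6U\pm(h_2+V)}$ are uniformly bounded on the horizon-rod slab so that the $O(1)$ twist-potential differences are absorbed by the cut-off. The only cosmetic discrepancy is that your initial decomposition is purely $\rho$-based while you later invoke the $\delta$-restricted sets $\mathcal{W}^i_{\delta,\varepsilon}$; this is harmless because the hypothesis $\Psi\equiv\Psi_{MP}$ on $B_\delta(p_\pm)$ makes $G_\varepsilon(\Psi)=\Psi$ there, so the two decompositions agree where it matters.
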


\begin{proof}
	Write
	\begin{equation}\label{94}
		\mathcal{I}(G_{\varepsilon}({\Psi}))
		=\mathcal{I}_{\mathcal{C}_{\delta,\varepsilon}}(G_{\varepsilon}({\Psi}))
		+\mathcal{I}_{\mathcal{W}^1_{\delta,\varepsilon}}(G_{\varepsilon}({\Psi}))
		+\mathcal{I}_{\mathcal{W}^2_{\delta,\varepsilon}}(G_{\varepsilon}({\Psi}))
		+\mathcal{I}_{\mathbb{R}^{3}\setminus(\mathcal{C}_{\delta,\varepsilon}
			\cup\mathcal{W}^1_{\delta,\varepsilon})\cup\mathcal{W}^2_{\delta,\varepsilon})}(G_{\varepsilon}({\Psi})).
	\end{equation}
	Since
	${\Psi}\equiv{\Psi}_{MP}$ on $B_{\delta}(p_+)\cup B_{\delta}(p_-)$, the DCT and finite energy of ${\Psi}_{MP}$ imply that
	\begin{equation}\label{95}
		\mathcal{I}_{\mathbb{R}^{3}\setminus(\mathcal{C}_{\delta,\varepsilon}
			\cup\mathcal{W}^1_{\delta,\varepsilon})\cup\mathcal{W}^2_{\delta,\varepsilon})}(G_{\varepsilon}({\Psi}))
		\rightarrow \mathcal{I}({\Psi}).
	\end{equation}
	Moreover, we have 
	\begin{align}
		\begin{split}
			\mathcal{I}_{\mathcal{C}_{\delta,\varepsilon}}(G_{\varepsilon}({\Psi}))=&
			\int_{\mathcal{C}_{\delta,\varepsilon}}12|\nabla \overline{U}|^{2}
			+|\nabla \overline{V}_{MP}|^{2}+|\nabla W_{MP}|^{2}
			+\sinh^{2}W_{MP}|\nabla({V}_{MP}+h_2)|^{2}+\frac{e^{-6h_1-6{U}-h_2-{V}_{MP}}}{\cosh W_{MP}}|\nabla \zeta_{MP}^{1}|^{2}
			\\
			&+\int_{\mathcal{C}_{\delta,\varepsilon}}
			e^{-6h_1-6{U}+h_2+{V}_{MP}}\cosh W_{MP}
			\left|\nabla \zeta_{MP}^{2}-e^{-h_2-{V}_{MP}}\tanh W_{MP}\nabla \zeta_{MP}^{1}\right|^{2}.
		\end{split}
	\end{align}The first term converges to zero because of boundedness $|\nabla \overline{U}|$ and the DCT. Combining the fact that the potentials of $G_{\varepsilon}({\Psi})$ agrees with $\Psi_{MP}$ on $\mathcal{C}_{\delta,\varepsilon}$, $e^{U}\leq C e^{U_{MP}}$, $e^{V}\leq C e^{V_{MP}}$,  the energy of $\Psi$ and $\overline{\Psi}$, and the DCT, all other terms converge to zero.
	
	The term $\mathcal{I}_{\mathcal{W}^1_{\delta,\varepsilon}}(G_{\varepsilon}({\Psi}))$ converges to zero from \cite[Lemma 4.4.]{Alaeemassam}. Now consider the last term
	\begin{align}
		\begin{split}
			\mathcal{I}_{\mathcal{W}^2_{\delta,\varepsilon}}(G_{\varepsilon}({\Psi}))
			=&\underbrace{\int_{\mathcal{W}^2_{\delta,\varepsilon}}
				12|\nabla \overline{U}|^{2}}_{I_{1}}
			+\underbrace{\int_{\mathcal{W}^2_{\delta,\varepsilon}}
				|\nabla {V}_{\varepsilon}|^{2}
			+\int_{\mathcal{W}^2_{\delta,\varepsilon}}
				|\nabla W_{\varepsilon}|^{2}}_{I_{2}}\\
			&+\underbrace{\int_{\mathcal{W}^2_{\delta,\varepsilon}}
				\sinh^{2}W_{\varepsilon}|\nabla({V}_{\varepsilon}+h_{2})|^{2}}_{I_{3}}
			+\underbrace{\int_{\mathcal{W}^2_{\delta,\varepsilon}}
				\frac{e^{-6h_1-6{U}-h_2-{V}_{\varepsilon}}}{\cosh W_{\varepsilon}}
				|\nabla \zeta^{1}_{\delta}|^{2}}_{I_{4}}\\
			&+\underbrace{\int_{\mathcal{W}^2_{\delta,\varepsilon}}
				e^{-6h_1-6{U}+h_2+{V}_{\varepsilon}}\cosh W_{\varepsilon}
				|\nabla \zeta^{2}_{\varepsilon}-e^{-h_2^S-V_{\varepsilon}}\tanh W_{\varepsilon}
				\nabla \zeta^{1}_{\varepsilon}|^{2}}_{I_{5}}.
		\end{split}
	\end{align}The first term $I_1\to 0$ using boundedness of $|\nabla \overline{U}|^{2}$. 
	We have
\begin{equation}
I_{2}\leq C\int_{\mathcal{W}^2_{\delta,\varepsilon}}
\left(|\nabla \overline{V}|^{2}+|\nabla \overline{V}_{MP}|^{2}+|\nabla W|^{2}+|\nabla W_{MP}|^{2}
+(\log\varepsilon)^{-2}\rho^{-2}\underbrace{(\overline{V}-\overline{V}_{MP})^{2}}_{O(1)}+(\log\varepsilon)^{-2}\rho^{-2}\underbrace{(W-W_{MP})^{2}}_{O(1)}\right),
\end{equation}The first four terms converge to zero using finite energy of $\Psi$ and $\Psi_{MP}$ and the DCT. The last two terms also converge to zero by the DCT and the fact that 
\begin{equation}
\int_{\mathcal{W}^2_{\delta,\varepsilon}}(\log\varepsilon)^{-2}\rho^{-2}=O\left(\left(\log\varepsilon\right)^{-1}\right)\to 0
\end{equation}Similarly, $I_3\to 0$. Furthermore, combining $\zeta^i-\zeta^i_{MP}=O(1)$, $-6h_1-6U-h_2-V, -6h_1-6U+h_2+V=O(1)$, and the finite energy of $\Psi$ and $\Psi_{MP}$, we obtain that $I_4,I_5\to 0$.
\end{proof}

We compose the three cut and paste operations defined above and define the following map
\begin{equation}\label{105}
	{\Psi}_{\delta,\varepsilon}
	=G_{\varepsilon}\left(F_{\delta}\left(
	\overline{F}_{\delta}(\overline{\Psi})\right)\right).
\end{equation}Then Lemma \ref{cutandpaste1}-Lemma \ref{cutandpaste3} lead to the following result.

\begin{prop}\label{proposition}
	Let $\varepsilon\ll\delta\ll 1$ and suppose that ${\Psi}$ satisfies the hypotheses of Theorem \ref{infimum}. Then
	${\Psi}_{\delta,\varepsilon}$ satisfies \eqref{54} and
	\begin{equation}\label{106}
		\lim_{\delta\rightarrow 0}\lim_{\varepsilon\rightarrow 0}
		\mathcal{I}({\Psi}_{\delta,\varepsilon})=\mathcal{I}({\Psi}).
	\end{equation}
\end{prop}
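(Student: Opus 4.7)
The plan is to verify the support property \eqref{54} by direct inspection of the construction and then establish the double-limit energy convergence by iterating Lemmas \ref{cutandpaste1}, \ref{cutandpaste2}, \ref{cutandpaste3} in the order dictated by the composition $\Psi_{\delta,\varepsilon} = G_\varepsilon \circ F_\delta \circ \overline{F}_\delta$.

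For \eqref{54}, the outer cut-off $\overline{F}_\delta$ from \eqref{681} acts on all five components through the radial cut-off $\overline{\varphi}_\delta$ and coincides with $\Psi_{MP}$ on $\{r \geq 2/\delta\}$, whereas $F_\delta$ from \eqref{68} and $G_\varepsilon$ from \eqref{92} leave $\overline{U}$ untouched and only modify the other four components. This gives $\operatorname{supp}(\overline{U}_{\delta,\varepsilon}-\overline{U}_{MP}) \subset B_{2/\delta}$ at once. For the other four components, $F_\delta$ enforces agreement with $\Psi_{MP}$ on $\{r_\pm \leq \delta\}$ and $G_\varepsilon$ enforces agreement with $\Psi_{MP}$ on $\{\rho \leq \varepsilon\}$. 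Since each cut-off acts by multiplying a pre-existing difference by a function taking values in $[0,1]$, once a component matches $\Psi_{MP}$ on a region it remains matched under all later operations. Consequently the other four components coincide with their Myers-Perry counterparts off $\Omega_{\delta,\varepsilon}$, which is the second inclusion in \eqref{54}.

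For the convergence, fix $\delta$ small and set $\Psi'_\delta := F_\delta(\overline{F}_\delta(\Psi))$. By construction $\Psi'_\delta \equiv \Psi_{MP}$ outside $B_{2/\delta}$, and on $B_\delta(p_\pm)$ its last four components agree with those of $\Psi_{MP}$, while $|\nabla\overline{U}|$ remains uniformly bounded there. These are precisely the hypotheses used in the proof of Lemma \ref{cutandpaste3} (together with the stated caveat concerning behaviour at infinity), so
\[
\lim_{\varepsilon \to 0} \mathcal{I}(G_\varepsilon(\Psi'_\delta)) = \mathcal{I}(\Psi'_\delta).
\]
To pass to the limit $\delta \to 0$, I would decompose
\[
\mathcal{I}(\Psi'_\delta) - \mathcal{I}(\Psi) = \bigl[\mathcal{I}(F_\delta(\overline{F}_\delta(\Psi))) - \mathcal{I}(\overline{F}_\delta(\Psi))\bigr] + \bigl[\mathcal{I}(\overline{F}_\delta(\Psi)) - \mathcal{I}(\Psi)\bigr].
\]
The second bracket vanishes by Lemma \ref{cutandpaste1}. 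For the first, observe that $F_\delta$ is the identity off $B_{2\delta}(p_\pm)$, and for $\delta$ small enough $B_{2\delta}(p_\pm) \subset \{r \leq 1/\delta\}$ where $\overline{F}_\delta$ is also the identity; hence on $B_{2\delta}(p_\pm)$ we have $\overline{F}_\delta(\Psi) = \Psi$ and $F_\delta(\overline{F}_\delta(\Psi)) = F_\delta(\Psi)$. The bracket therefore reduces to $\mathcal{I}_{B_{2\delta}(p_\pm)}(F_\delta(\Psi)) - \mathcal{I}_{B_{2\delta}(p_\pm)}(\Psi)$, which tends to zero by the localized dominated-convergence computations that already appear in the proof of Lemma \ref{cutandpaste2}. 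Combining these two limits yields \eqref{106}.

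The main obstacle is the bookkeeping needed to ensure that the three cut-off operations do not step on one another. The scale separation $\varepsilon \ll \delta \ll 1$ is essential: it guarantees that the cylindrical region $\{\rho \leq \sqrt{\varepsilon}\}$ modified by $G_\varepsilon$, the rod-point balls $B_{2\delta}(p_\pm)$ modified by $F_\delta$, and the annulus $\{1/\delta < r < 2/\delta\}$ modified by $\overline{F}_\delta$ are pairwise well-separated, so that each operation preserves the "$\equiv \Psi_{MP}$" regions established by its predecessors. Under this hierarchy the three limits can be performed iteratively in the order $\varepsilon \to 0$ first and then $\delta \to 0$, which is exactly the order in the statement of the proposition.
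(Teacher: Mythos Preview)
Your proposal is correct and follows essentially the same approach as the paper, which simply records that Lemmas \ref{cutandpaste1}--\ref{cutandpaste3} combine to give the result. You have supplied the bookkeeping the paper leaves implicit: the support check, the observation that the hypotheses of Lemma \ref{cutandpaste3} are met by $\Psi'_\delta=F_\delta(\overline{F}_\delta(\Psi))$ (noting correctly that only boundedness of $\overline{U}$ and $|\nabla\overline{U}|$ is actually used near the rod points, not $\overline{U}=\overline{U}_{MP}$), and the localization argument showing that the $F_\delta$ and $\overline{F}_\delta$ modifications decouple for $\delta$ small so that Lemmas \ref{cutandpaste1} and \ref{cutandpaste2} apply sequentially.
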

We are now in a position to establish the main result of this section.\medskip

\noindent\textit{Proof of Theorem \ref{infimum}.} Let $\tilde{\Psi}^{t}_{\delta,\varepsilon}$ be the geodesic connecting $\tilde{\Psi}_{MP}$ to
$\tilde{\Psi}_{\delta,\varepsilon}$ as described at the beginning of this section
\begin{equation}\label{107}
\frac{d^{2}}{dt^{2}}\mathcal{I}({\Psi}^{t}_{\delta,\varepsilon})
=
\underbrace{\frac{d^{2}}{dt^{2}}\mathcal{I}_{\mathcal{A}_{\delta,\varepsilon}}
	({\Psi}^{t}_{\delta,\varepsilon})}_{I_{1}}+\underbrace{\frac{d^{2}}{dt^{2}}\mathcal{I}_{\Omega_{\delta,\varepsilon}}
	({\Psi}^{t}_{\delta,\varepsilon})}_{I_{2}}.
\end{equation}It follows from Proposition \ref{proposition} that
$\overline{U}^{t}_{\delta,\varepsilon}=\overline{U}_{MP}+t(\overline{U}_{\delta,\varepsilon}-\overline{U}_{MP})$ and $\overline{V}^{t}_{\delta,\varepsilon}=\overline{V}_{MP}$ on $\mathcal{A}_{\delta,\varepsilon}$. Combining these with $\operatorname{dist}_{SL(3,\mathbb{R})/SO(3)}({\Psi}_{\delta,\varepsilon},{\Psi}_{0})
=12|\overline{U}_{\delta,\varepsilon}-\overline{U}_{0}|$ on $\mathcal{A}_{\delta,\varepsilon}$ and the asymptotes  \eqref{l1}-\eqref{fall4.3}, we can pass the derivative into the integral and similar to \cite{Alaeemassam} obtain that
\begin{align}\label{109}
\begin{split}
I_{1}\geq &2\int_{\mathcal{A}_{\delta,\varepsilon}}
|\nabla\operatorname{dist}_{SL(3,\mathbb{R})/SO(3)}
({\Psi}_{\delta,\varepsilon},{\Psi}_{MP})|^{2}
\end{split}
\end{align}
On domain $\Omega_{\delta,\varepsilon}$, using the relation of the reduced energy and harmonic energy and the convexity of the harmonic energy follows
\begin{align}\label{108}
\begin{split}
I_{1}=&\frac{d^{2}}{dt^{2}}E_{\Omega_{\delta,\varepsilon}}
(\tilde{\Psi}^{t}_{\delta,\varepsilon})-12\frac{d^{2}}{dt^{2}}\int_{\partial\Omega_{\delta,\varepsilon}
	\cap\partial\mathcal{A}_{\delta,\varepsilon}}
(h_1+2(\overline{U}_{MP}
+t(\overline{U}_{\delta,\varepsilon}-\overline{U}_{MP})-U_S)\partial_{\nu}\left(h_1+U_S\right)\\
&-\frac{d^{2}}{dt^{2}}\int_{\partial\Omega_{\delta,\varepsilon}
	\cap\partial\mathcal{A}_{\delta,\varepsilon}}(2\overline{V}_{MP}-h_2-V_S)\partial_{\nu}\left(h_{2}+V_S\right),
\\
\geq & 2\int_{\Omega_{\delta,\varepsilon}}
|\nabla\operatorname{dist}_{SL(3,\mathbb{R})/SO(3)}
({\Psi}_{\delta,\varepsilon},{\Psi}_{0})|^{2}.
\end{split}
\end{align}Thus we have
\begin{equation}\label{2ndvariation}
\frac{d^{2}}{dt^{2}}\mathcal{I}({\Psi^{t}}_{\delta,\varepsilon})
\geq 2\int_{\mathbb{R}^{3}}|\nabla\operatorname{dist}_{SL(3,\mathbb{R})/SO(3)}
({\Psi}_{\delta,\varepsilon},{\Psi}_{MP})|^{2}dx.
\end{equation}

We now verify that the first variation for ${\Psi}_{\delta,\varepsilon}$ vanishes. Choose $\varepsilon_{0}<\varepsilon$,
$\delta_{0}<\delta$ and write
\begin{equation}\label{110}
\frac{d}{dt}\mathcal{I}({\Psi}^{t}_{\delta,\varepsilon})
=\underbrace{\frac{d}{dt}\mathcal{I}_{\Omega_{\delta_{0},\varepsilon_{0}}}({\Psi}^{t}_{\delta,\varepsilon})}_{I_{3}}+
\underbrace{\frac{d}{dt}\mathcal{I}_{\mathcal{A}_{\delta_{0},\varepsilon_{0}}}({\Psi}^{t}_{\delta,\varepsilon})}_{I_{4}}.
\end{equation}
Using the relation of reduced energy and harmonic energy we have
\begin{equation}\label{111}
I_{3}=\frac{d}{dt}E_{\Omega_{\delta,\varepsilon}}
(\tilde{\Psi}^{t}_{\delta,\varepsilon})-\sum_{\pm}\int_{\partial B_{\delta_{0}}(p_{\pm})}24(\overline{U}_{\delta,\varepsilon}-\overline{U}_{MP})\partial_{\nu}\overline{U}_{MP}
-\int_{\partial\mathcal{C}_{\delta_{0},\varepsilon_{0}}}
24(\overline{U}_{\delta,\varepsilon}-\overline{U}_{MP})\partial_{\nu}\overline{U}_{MP}
\end{equation}where $\nu$ is the unit outer normal pointing towards the designated asymptotically flat end. Since $\overline{U}+|\nabla \overline{U}|$ is uniformly bounded and $\tilde{\Psi}_{MP}$ is a critical point of $E$, at $t=0$, the integral $I_3$ converges to zero as $\epsilon_0\to 0$ followed by $\delta_0\to 0$.  Next, using that
$\overline{U}^{t}_{\delta,\varepsilon}=\overline{U}_{0}+t(\overline{U}_{\delta,\varepsilon}-\overline{U}_{0})$
and $\frac{d}{dt}\overline{V}^{t}_{\delta,\varepsilon}=
\frac{d}{dt}W^{t}_{\delta,\varepsilon}
=\frac{d}{dt}\zeta^{1,t}_{\delta,\varepsilon}=\frac{d}{dt}\zeta^{2,t}_{\delta,\varepsilon}=0$ on $\mathcal{A}_{\delta_{0},\varepsilon_{0}}$ produces
\begin{align}\label{112}
\begin{split}
I_{4}=&O(t)+\int_{\mathcal{A}_{\delta_{0},\varepsilon_{0}}}
24\nabla \overline{U}_{MP}\cdot\nabla(\overline{U}_{\delta,\varepsilon}-\overline{U}_{MP})
-6(\overline{U}_{\delta,\varepsilon}-\overline{U}_{MP})
\frac{e^{-6h_1-6{U}_{\delta,\varepsilon}^{t}-h_2-{V}_{MP}}}{\cosh W_{MP}}|\nabla\zeta^{1}_{MP}|^{2}\\
&-\int_{\mathcal{A}_{\delta_{0},\varepsilon_{0}}}
6(\overline{U}_{\delta,\varepsilon}-\overline{U}_{MP})e^{-6h_1-6{U}_{\delta,\varepsilon}^{t}+h_2+{V}_{MP}}\cosh W_{MP}
|e^{-h_2-{V}_{MP}}\tanh W_{MP}\nabla\zeta^{1}_{MP}-\nabla\zeta^{2}_{MP}|^{2}.
\end{split}
\end{align}We can integrate by parts the first term and get the boundary terms in \eqref{111}. Then by reduced energy of $\Psi_{MP}$ and the DCT, similar to $I_3$, the integral $I_4$ converges to zero as $\epsilon_0\to 0$ followed by $\delta_0\to 0$.

Now integrating \eqref{2ndvariation} twice, applying a Sobolev inequality, a triangle inequality, and Proposition \ref{proposition} we obtain 
\begin{equation}\label{e1}
\begin{split}
\mathcal{I}(\Psi)-\mathcal{I}(\Psi_{MP})
\geq& C\left(\int_{\mathbb{R}^{3}}
\operatorname{dist}_{SL(3,\mathbb{R})/SO(3)}^{6}(\Psi,\Psi_{MP})dx
\right)^{\frac{1}{3}}\\
&-C\lim_{\delta\rightarrow 0}\lim_{\varepsilon\rightarrow 0}\int_{\mathbb{R}^{3}}
\operatorname{dist}_{SL(3,\mathbb{R})/SO(3)}^{6}({\Psi}_{\delta,\varepsilon},{\Psi})dx.
\end{split}
\end{equation}

In order to complete the proof, we need to show that last term vanishes. Combining the
triangle inequality and the fact that the distance between two points in $SL(3,\mathbb{R})/SO(3)$ is not greater than the length of a coordinate line connecting them, we have
\begin{align}\label{119}
\begin{split}
&\operatorname{dist}_{SL(3,\mathbb{R})/SO(3)}({\Psi}_{\delta,\varepsilon},{\Psi})\\
\leq&\operatorname{dist}_{SL(3,\mathbb{R})/SO(3)}
((\overline{U}_{\delta,\varepsilon},\overline{V}_{\delta,\varepsilon},W_{\delta,\varepsilon},
\zeta^{1}_{\delta,\varepsilon},\zeta^{2}_{\delta,\varepsilon}),
(\overline{U},\overline{V}_{\delta,\varepsilon},W_{\delta,\varepsilon},
\zeta^{1}_{\delta,\varepsilon},\zeta^{2}_{\delta,\varepsilon}))\\
&+\operatorname{dist}_{SL(3,\mathbb{R})/SO(3)}
((\overline{U},\overline{V}_{\delta,\varepsilon},W_{\delta,\varepsilon},
\zeta^{1}_{\delta,\varepsilon},\zeta^{2}_{\delta,\varepsilon}),
(\overline{U},\overline{V},W_{\delta,\varepsilon},
\zeta^{1}_{\delta,\varepsilon},\zeta^{2}_{\delta,\varepsilon}))\\
&+\operatorname{dist}_{SL(3,\mathbb{R})/SO(3)}
((\overline{U},\overline{V},W_{\delta,\varepsilon},
\zeta^{1}_{\delta,\varepsilon},\zeta^{2}_{\delta,\varepsilon}),
(\overline{U},\overline{V},W,
\zeta^{1}_{\delta,\varepsilon},\zeta^{2}_{\delta,\varepsilon}))\\
&+\operatorname{dist}_{SL(3,\mathbb{R})/SO(3)}
((\overline{U},\overline{V},W,
\zeta^{1}_{\delta,\varepsilon},\zeta^{2}_{\delta,\varepsilon}),
(\overline{U},\overline{V},W,
\zeta^{1},\zeta^{2}_{\delta,\varepsilon}))\\
&+\operatorname{dist}_{SL(3,\mathbb{R})/SO(3)}
((\overline{U},\overline{V},W,
\zeta^{1},\zeta^{2}_{\delta,\varepsilon}),
(\overline{U},\overline{V},W,
\zeta^{1},\zeta^{2}))\\
\leq& C\left(|\overline{U}-\overline{U}_{\delta,\varepsilon}|+|\overline{V}-\overline{V}_{\delta,\varepsilon}|
+|W-W_{\delta,\varepsilon}|\right)\\
&+Ce^{-3{U}-3h_1}\left(e^{-\tfrac{1}{2}{V}-\tfrac{1}{2}h_2}
|\zeta^{1}-\zeta^{1}_{\delta,\varepsilon}|
+e^{\tfrac{1}{2}{V}+\tfrac{1}{2}h_2}
|\zeta^{2}-\zeta^{2}_{\delta,\varepsilon}|\right).
\end{split}
\end{align}Similar to \cite[Theorem 4.1.]{Alaeemassam}, combing this with the asymptotes in Section \ref{sec:asymp}, the limit on the right hand side of \eqref{e1} vanishes. This complete the proof.
\section{Proof of the Theorem \ref{mainthm}}
Let $\Psi$ be the associate map of initial data set $(M,g,k)$ in the Theorem \ref{mainthm}. The asymptotic assumptions on the initial data $(M,g,k)$ imply that $\Psi$ satisfy the asymptotics
\eqref{fall1}-\eqref{fall4.2}. Since $\Psi_{MP}$ satisfies asymptotes of Section \ref{sec:asymp} as shown in Appendix \ref{AppMP}. Moreover, its mass, angular momentum, and area satisfy the following relation
\begin{equation}
\begin{split}
m_{MP}&=\left(\frac{3\pi}{8}\left(\frac{3A_{MP}}{16\pi}\right)^2
+\frac{\frac{3\pi}{8}\left(\frac{9}{4}\right)^{2}\mathcal{J}_{1}^{2}\mathcal{J}_{2}^{2}}
{\left(\frac{3A_{MP}}{16\pi}\right)^2}
+\frac{27\pi}{32}(\mathcal{J}_{1}^{2}+\mathcal{J}_{2}^{2})\right)^{1/3}\\
&=\mathcal{I}(\Psi_{MP})+\frac{\pi}{4}\int_{H}\left(2\overline{\alpha}_{MP}+6\overline{U}_{MP}-\text{sgn}(z)\overline{V}_{MP}\right)dz+m_S
\end{split}
\end{equation}Combining this with Theorem \ref{infimum} and \eqref{mass}, we obtain the inequality \eqref{mainineq}. The proof of the rigidity case is similar to \cite[Proof of Theorem 1.1.]{Alaeemassam}. 
 \hfill\qedsymbol\medskip

\appendix
\section{Schwarzschild-Tangherlini Weyl data}
The metric in standard exterior coordinates of  the five-dimensional Schwarzschild-Tangherlini Ricci flat black hole solution is 
\begin{equation}
\overline{g}=-\left(1-\frac{4\mathfrak{m}}{r^2}\right) dt^2+\left(1-\frac{4\mathfrak{m}}{r^2}\right)^{-1}dr^2+r^2\left(d\theta^2+\sin^2\theta (d\phi^1)^2+\cos^2\theta (d\phi^2)^2\right)
\end{equation} The transformation to Weyl coordinates is given by \cite{Pomeransky:2005sj}
\begin{eqnarray}
 \rho=\frac{1}{2}\sqrt{r^4-4\mathfrak{m}r^2}\sin2\theta, \qquad z=\frac{1}{2}\left(r^2-2\mathfrak{m}\right)\cos2\theta.
\end{eqnarray}Then the metric takes the form \eqref{GBmetric}
\begin{equation}
\overline{g}=-\frac{\mu^{-\mathfrak{m}}_-}{\mu^{\mathfrak{m}}_-}dt^2+f_{S}\left(d\rho^2+dz^2\right)+\mu^{\mathfrak{m}}_- (d\phi^1)^2+\mu^{-\mathfrak{m}}_+ (d\phi^2)^2
\end{equation} where
\begin{equation}
f_S =  \frac{\mu_-^\mathfrak{m} (\rho^2 + \mu_-^{-\mathfrak{m}}\mu_-^\mathfrak{m})}{(\rho^2 + (\mu_-^{-\mathfrak{m}})^2)(\rho^2 + (\mu_-^{\mathfrak{m}})^2)}
\end{equation}
where for $s=\mathfrak{m},-\mathfrak{m}$ we have 
\begin{eqnarray}
\mu^s_{\pm}=\sqrt{\rho^2+(z-s)^2}\pm(z-s)\qquad \Delta\log\mu^s_{\pm}=0\quad \text{on $\R^3-\Gamma$} . 
\end{eqnarray}
The initial data is parameterized by the functions
\begin{equation}
U_S=\frac{1}{4}\log\left(\frac{\mu^\mathfrak{m}_-\mu^{-\mathfrak{m}}_+}{\rho^2}\right),\qquad V_{S}=\frac{1}{2}\log\left(\frac{\rho^2}{\mu^{-\mathfrak{m}}_+\mu^\mathfrak{m}_-}\right)=-2U_S,\qquad W_S=0\label{U_S}
\end{equation}
and 
\begin{equation}
\alpha_S=\frac{1}{2}\log(2\sqrt{\rho^2+(z-\mathfrak{m})^2}) - U_S + \frac{1}{2} \log f_S
\end{equation} These functions have the property that $\mu_+^{\mathfrak{m}} \mu_-^\mathfrak{m} = \mu_+^{\mathfrak{-m}} \mu_-^\mathfrak{-m}  = \rho^2$. 
Then we have the following expansions on the horizon rod $H=(-\mathfrak{m},\mathfrak{m})$
\begin{eqnarray}
U_S&=&-\frac{1}{2}\log\rho+\frac{1}{4}\log\left(4|\mathfrak{m}^2-z^2|\right)+O(\rho^2)\label{Schw1}\\
V_{S}&=&\log\rho-\frac{1}{2}\log\left(4|\mathfrak{m}^2-z^2|\right)+O(\rho^2)\label{Schw2}\\
\alpha_S&=& \frac{1}{2} \log \rho + \frac{1}{2} \log |z-\mathfrak{m}| + \frac{1}{2} 
\log \mathfrak{m} - \frac{3}{4} \log {|\mathfrak{m}^2 - z^2|}+O(\rho^2)
\end{eqnarray} 
Outside the horizon rod $\Gamma_+\cup\Gamma_-$ we have the following expansions
\begin{eqnarray}
U_S&=&\frac{\text{sgn}(z)}{4}\log\left(\frac{|z+\mathfrak{m}|}{|z-\mathfrak{m}|}\right)+O(\rho^2)\label{Schw11}\\
V_{S}&=&-\frac{\text{sgn}(z)}{2}\log\left(\frac{|z+\mathfrak{m}|}{|z-\mathfrak{m}|}\right)+O(\rho^2)\label{Schw22}\\
\alpha_S&=&\frac{1}{2} \log \frac{ |z-\mathfrak{m}|}{\sqrt{|z^2 - \mathfrak{m}^2|}} + O(\rho^2)
\end{eqnarray} 
Now recall $r_+$ is the distance from the rod point $p_+=(0,\mathfrak{m})$ and $r_-$ is the distance from the rod point $p_-=(0, -\mathfrak{m})$, namely
\begin{equation}
r_{\pm}=\sqrt{\rho^2+(z\mp\mathfrak{m})^2}
\end{equation} which are related to the Weyl variables by
\begin{equation}
\rho^2 =\frac{1}{16\mathfrak{m}^2} \left(4r_+^2r_-^2-(r_+^2 + r^2_- -4\mathfrak{m}^2)^2\right),\qquad z = \frac{r_-^2 - r_+^2}{4\mathfrak{m}}
\end{equation}Near points $p_{\pm}$, we can define angles $\alpha_{\pm}$ such that $\rho=r_{\pm}\sin\alpha_{\pm}$ and $z=r_{\pm}\cos\alpha_{\pm}\pm \mathfrak{m}$. Then, we obtain the behaviour
\begin{equation} \begin{aligned}
e^{-4U_S} & = e^{2V_S} =  O(r_+), \qquad \text{as } r_+ \to 0, z > \mathfrak{m} \\ e^{-4U_S} & = e^{2V_S} =  O(\rho^2 r_+^{-1}), \qquad \text{as } r_+ \to 0, z < \mathfrak{m} \\
e^{-4U_S} & = e^{2V_S} =  O(r_-), \qquad \text{as } r_- \to 0, z <-
\mathfrak{m}  \\ e^{-4U_S} & = e^{2V_S} =  O(\rho^2 r_-^{-1}), \qquad \text{as } r_- \to 0, z > -\mathfrak{m}
\end{aligned}
\end{equation} Now notice that
\begin{equation}
e^{2\alpha} = \frac{\sqrt{ (r_+ + r_- - 2\mathfrak{m})(r_+ + r_- + 2\mathfrak{m})}}{2 r_-}
\end{equation} Then
\begin{equation}\begin{aligned}
e^{4\alpha_S} &= O(r_+), \quad \text{as } r_+ \to 0, z > \mathfrak{m} \qquad
e^{4\alpha_S} = O(\rho^2 r_+^{-1}), \quad \text{as } r_+ \to 0, z < \mathfrak{m} \\
e^{4\alpha_S} &= O(r_-^{-1}), \quad \text{as } r_- \to 0, z < -\mathfrak{m} \qquad
e^{4\alpha_S} = O(\rho^2 r_-^{-3}), \quad \text{as } r_- \to 0, z > -\mathfrak{m}
\end{aligned}
\end{equation} To work out the asymptotics for large $r$, it is useful to have the simple expressions
\begin{equation}
U_S = -\frac{1}{4} \log \left( 1- \frac{4\mathfrak{m}}{r^2}\right)
\end{equation} so that 
\begin{equation}
U_S = O(r^{-2}), \quad V_S = O(r^{-2}),  \quad \alpha_S = O(r^{-2}) \quad \text{as } \; r \to \infty
\end{equation} and
\begin{equation}
|\nabla U_S| = O(r^{-4}), \quad |\nabla V_S| = O(r^{-4}), \quad |\nabla \alpha_S| = O(r^{-4})
\end{equation}Finally, we express the ADM mass of the Schwarzschild-Tangherlini spacetime, following the steps in Section \ref{Sec1}, as follows 
\begin{eqnarray}\label{mS}
m_S&=&\frac{1}{16}\int_{\R^3}12|\nabla U_S|^{2}+|\nabla V_{S}|^{2} dx+\frac{\pi}{4}\int_{-\mathfrak{m}}^\mathfrak{m}\left[2\alpha_S+6U_S+V_{S}\right]dz+\pi \mathfrak{m}.
\end{eqnarray} 

\section{Myers-Perry Weyl data} \label{AppMP}
The three-parameter family of Myers-Perry vacuum black hole spacetimes \cite{Myers1986} are solutions to the vacuum Einstein equations in all dimensions greater than four, and contain event horizons with spatial sections of spherical topology. They can be regarded as the natural generalization to higher dimensions of the 4-dimensional Kerr black holes. In coordinates analogous to those of Boyer-Lindquist used for the Kerr solution, the 5-dimensional Myers-Perry metric takes the form
\begin{align}
	\begin{split}
		\mathbf{g} &= -dt^2+\frac{\kappa}{\Sigma}\left(dt+a\sin^2\theta d\phi^1+b\cos^2\theta d\phi^2\right)^2+\frac{\overline{r}^2\Sigma}{\Delta}d\overline{r}^2\\
		&+ \Sigma d\theta^2+\left(\overline{r}^2+a^2\right)\sin^2\theta (d\phi^1)^2
		+\left(\overline{r}^2+b^2\right)\cos^2\theta (d\phi^2)^2,
	\end{split}
\end{align}
where
\begin{gather}
	\Sigma=\overline{r}^2+b^2\sin^2\theta+a^2\cos^2\theta,\qquad \Delta=\left(\overline{r}^2+a^2\right)\left(\overline{r}^2+b^2\right)-\kappa \overline{r}^2.
\end{gather}
This family of solutions is parameterized by $(\kappa,a,b)$ which give rise to the mass and angular momenta through the formulae
\begin{gather}
	m=\frac{3}{8}\pi\kappa,\qquad \mathcal{J}_{1}=\frac{2}{3} ma,\qquad \mathcal{J}_{2}=\frac{2}{3} mb;
\end{gather}
where $\kappa\geq (a+b)^2$. Weyl coordinates are defined by
\begin{eqnarray}
\rho=\frac{1}{2}\sqrt{\Delta}\sin 2\theta,\qquad z=\frac{1}{4}\left(2\overline{r}^2+a^2+b^2-\kappa\right)\cos 2\theta
\end{eqnarray}  The set $\rho =0$ corresponds to the points on which the metric restricted to the Killing fields degenerates and consists of a finite horizon rod and two semi-infinite rods corresponds to fixed points sets of the rotational Killing fields $\partial_{\phi^1}, \partial_{\phi^2}$.  In particular the horizon rod  $H=(-\mathfrak{m},\mathfrak{m})$ where
\begin{equation}
\mathfrak{m}=\frac{1}{4}\sqrt{(\kappa-a^2-b^2)^2-4a^2b^2}
\end{equation} The black hole is referred to as extreme if $\kappa=(a+b)^2$; in this case the surface gravity vanishes and the black hole is degenerate. 

To find the inverse map it is useful to introduce the distance functions in $\mathbb{R}^2$ from the horizon endpoints $p_\pm$,
\begin{equation}
r_{\pm}=\sqrt{\rho^2+(z\mp\mathfrak{m})^2}
\end{equation} which can be expressed as
\begin{equation}
r_-  = \frac{\mu_+^{-\mathfrak{m}} + \mu_-^{-\mathfrak{m}}}{2}, \qquad r_+ = \frac{\mu_+^{\mathfrak{m}} + \mu_-^{\mathfrak{m}}}{2}
\end{equation}  We then have the inverse transformation
\begin{equation}
\bar{r} = \left[ r_- + r_+ + \frac{\kappa -a^2 - b^2}{2}\right]^{1/2}, \qquad \cos 2\theta = \frac{r_- - r_+}{2\mathfrak{m}}
\end{equation} This produces the relations
\begin{equation}
\rho^2 = \frac{1}{16\mathfrak{m}^2} \left(4\mathfrak{m}^2 - (r_--r_+)^2\right)\left((r_+ + r_-)^2 - 4 \mathfrak{m}^2\right), \qquad z = \frac{r_-^2 - r_+^2}{4\mathfrak{m}} . 
\end{equation} We may then express the metric of the slice $t=$constant of the Myers-Perry geometry in the form
\begin{equation}
h=\frac{e^{2U+2\alpha}}{2\sqrt{\rho^2+(z-\mathfrak{m})^2}}(d\rho^2+dz^2)+e^{2U}\lambda_{ij}d\phi^id\phi^j
\end{equation}
where
\begin{eqnarray}
\Sigma&=& r_++r_-+\left(\frac{a^2-b^2}{4\mathfrak{m}}\right)(r_--r_+)+\frac{\kappa}{2}\\
e^{4U}&=&\frac{4\mathfrak{m}^2-(r_+-r_-)^2}{16\mathfrak{m}^2\rho^2\Sigma}\left[\Sigma\left(r_++r_-+\frac{\kappa+a^2-b^2}{2}\right)\left(r_++r_-+\frac{\kappa-a^2+b^2}{2}\right)\right.\nonumber\\
&+&\left.\kappa\left(r_++r_-+\frac{\kappa-a^2-b^2}{2}\right)\left(\frac{a^2+b^2}{2}+\frac{b^2-a^2}{4\mathfrak{m}}(r_--r_+)\right)+\kappa a^2b^2\right]\\
e^{2\alpha}&=&e^{-2U}\frac{\Sigma}{2r_-}\\
\lambda_{11}&=&e^{-2U}\frac{(r_+-r_-)+2\mathfrak{m}}{8\mathfrak{m}}\left[2(r_++r_-)+\kappa+a^2-b^2+\frac{a^2\kappa\left[(r_+-r_-)+2\mathfrak{m}\right]}{2\mathfrak{m}\Sigma}\right]\\
\lambda_{22}&=&e^{-2U}\frac{(r_--r_+)+2\mathfrak{m}}{8\mathfrak{m}}\left[2(r_++r_-)+\kappa-a^2+b^2+\frac{b^2\kappa\left[(r_--r_+)+2\mathfrak{m}\right]}{2\mathfrak{m}\Sigma}\right]\\
\lambda_{12}&=&e^{-2U}\frac{ab\kappa\left[4\mathfrak{m}^2-(r_+-r_-)^2\right]}{16\mathfrak{m}^2\Sigma}
\end{eqnarray}
and
\begin{equation}
V=\frac{1}{2}\log\left(\frac{\lambda_{11}\mu^\mathfrak{m}_+}{\lambda_{22}\mu^\mathfrak{m}_-}\right),\qquad W=\sinh^{-1}\left(\frac{\lambda_{12}}{\rho}\right)
\end{equation}
where $\mu^\mathfrak{m}_{\pm}=\sqrt{\rho^2+(z-\mathfrak{m})^2}\pm (z-\mathfrak{m})$. Then we have the harmonic non-extreme Myers-Perry data $\Psi=(U,V,W,\zeta_i)$ where
\begin{eqnarray}
\zeta^1&=& \frac{a \kappa\left[\left(r_--r_+-2\mathfrak{m}\right)^2-6\mathfrak{m}^2\right]}{16\mathfrak{m}^2} - \frac{a\kappa(a^2-b^2)\left(r_--r_++2\mathfrak{m}\right)\left(r_+-r_-+2\mathfrak{m}\right)^2}{64\mathfrak{m}^3\Sigma} \\
\zeta^2&=&  -\frac{b \kappa\left[\left(r_--r_++2\mathfrak{m}\right)^2-6\mathfrak{m}^2\right]}{16\mathfrak{m}^2} - \frac{b \kappa(a^2-b^2)\left(r_--r_++2\mathfrak{m}\right)^2\left(r_+-r_-+2\mathfrak{m}\right)}{64\mathfrak{m}^3\Sigma}
\end{eqnarray}
Now we expand the non-extreme MP data $\Psi$ in each regions near the symmetry axis and rod points $p_\pm$. Let $\rho\to 0$ and $z\in H$, then 
\begin{equation}
\begin{split}
U&=-\frac{1}{2}\log\rho+\frac{1}{4}\log(|\mathfrak{m}^2-z^2|)+O(1),\\
V&= \log \rho - \frac{1}{2} \log\left( 4|z^2-\mathfrak{m}^2|\right) + O(1) \\
\alpha&=\frac{1}{2}\log\rho+\frac{1}{2}\log|z-\mathfrak{m}|-\frac{3}{4}\log(|\mathfrak{m}^2-z^2|) + \frac{1}{2} \log \mathfrak{m} +O(1),\\
W&=O(1),\\
\zeta_i&=O(1)
\end{split}
\end{equation}
Let $\rho\to 0$ and $z\in \Gamma_-\cup\Gamma_+$ we have
\begin{equation}
\begin{split}
U&=\begin{cases} \frac{1}{4} \log\left[ \frac{(\kappa + 4z)^2 - (a^2-b^2)^2 + 4 \kappa b^2}{16 (z^2-\mathfrak{m}^2)}\right] + O(1) & z > \mathfrak{m} \\
\frac{1}{4} \log\left[ \frac{(\kappa - 4z)^2 - (a^2-b^2)^2 + 4 \kappa a^2}{16 (z^2-\mathfrak{m}^2)}\right] + O(1)  & z< -\mathfrak{m} \end{cases} \\
V&=-\frac{\text{sgn}(z)}{2} \log\frac{|z+\mathfrak{m}|}{|z - \mathfrak{m}|} + O(\rho^2) \\
\alpha&=\begin{cases} -U + \frac{1}{2}\log\left(\frac{4z + \kappa + a^2-b^2}{4(z + \mathfrak{m})} \right) & z > \mathfrak{m} \\ -U + \frac{1}{2} \log \left(\frac{\kappa - 4z + b^2 -a^2}{4|z+\mathfrak{m}|}\right) & z < -\mathfrak{m} \end{cases} \\
W&=O(\rho)\\
\zeta_i&=O(1)
\end{split}
\end{equation} To investigate the asymptotics near the rod points, set $\rho = r_\pm \sin\alpha_\pm, z = r_\pm \cos\alpha_\pm \pm \mathfrak{m}$. Then
\begin{equation}
\begin{split}
e^{-4U} &=e^{2V} =  O(r_+), \qquad \text{as } r_+ \to 0, z > \mathfrak{m} \\
e^{-4U} & = e^{2V} =  O(\rho^2 r_+^{-1}), \qquad \text{as } r_+ \to 0, z < \mathfrak{m} \\
e^{-4U} & = e^{2V} = O(r_-), \qquad \text{as } r_- \to 0, z < -\mathfrak{m} \\
e^{-4U} &= e^{2V} = O(\rho^2 r_-^{-1}), \qquad \text{as } r_- \to 0, z > -\mathfrak{m}
\end{split}
\end{equation}  From this it can be similarly computed that as $r_+, r_- \to 0$ (in any direction)
\begin{equation}\begin{aligned}
\bar{U} &= O(1), \qquad \bar{V}  = O(1) \\
|\nabla \bar{U}  &= O(1), \qquad |\nabla \bar{V}| = O(1), \qquad |\nabla W| = O(1)
\end{aligned}
\end{equation} The asymptotic behaviour of $\alpha$ near the rod points is given by 
\begin{equation}\begin{aligned}
e^{4\alpha} &= O(r_+), \quad \text{as } r_+ \to 0, z > \mathfrak{m} \qquad
e^{4\alpha} = O(\rho^2 r_+^{-1}), \quad \text{as } r_+ \to 0, z < \mathfrak{m} \\
e^{4\alpha} &= O(r_-^{-1}), \quad \text{as } r_- \to 0, z < -\mathfrak{m} \qquad
e^{4\alpha} = O(\rho^2 r_-^{-3}), \quad \text{as } r_- \to 0, z > -\mathfrak{m}
\end{aligned}
\end{equation} We also have
\begin{equation} \begin{split}
\sinh^2 W & = O(\rho^2 r_+^{-1}), \quad  \text{as } r_+ \to 0, z > \mathfrak{m} \qquad \sinh^2 W  = O(r_+), \quad \text{as } r_+ \to 0, z < \mathfrak{m} \\
\sinh^2 W & = O(\rho^2 r_-^{-1}), \quad  \text{as } r_- \to 0, z > -\mathfrak{m} \qquad \sinh^2 W  = O(r_-), \quad \text{as } r_- \to 0, z < -\mathfrak{m}
\end{split}
\end{equation} Finally 
\begin{equation} \begin{split}
\zeta_1 & = -\frac{3}{8} a \kappa + \begin{cases}  O(\rho^4 r_+^{-2}) &  \text{as } r_+ \to 0, z > \mathfrak{m} \\  O(r_+^2) & \text{as } r_+ \to 0 , z < \mathfrak{m} \end{cases} \\
\zeta_1 & = \frac{5}{8} a \kappa + \begin{cases}  O(r_-) &  \text{as } r_- \to 0, z > -\mathfrak{m} \\  O(\rho^2 r_-^{-1}) & \text{as } r_+ \to 0 , z < \mathfrak{m} \end{cases} \\
\zeta_2 & = -\frac{5}{8} b \kappa + \begin{cases}  O(\rho^2 r_+^{-1}) &  \text{as } r_+ \to 0, z > \mathfrak{m} \\  O(r_+) & \text{as } r_+ \to 0 , z < \mathfrak{m} \end{cases} \\
\zeta_2 & = \frac{3}{8} b \kappa + \begin{cases}  O(r_-^2) &  \text{as } r_- \to 0, z > -\mathfrak{m} \\  O(\rho^4 r_-^{-2}) & \text{as } r_- \to 0 , z < -\mathfrak{m} \end{cases} 
\end{split}
\end{equation} Finally we turn to the asymptotics as $\bar{r} \to \infty$.  It is easily checked that
\begin{equation}\begin{aligned}
e^{4U} &= 1 + \frac{\kappa}{\bar{r}^2} + O(\bar{r}^{-4}) \Rightarrow U = O(\bar{r}^{-2}), \\ e^{2V} &= 1 + \frac{a^2 - b^2 - 4\mathfrak{m}}{\bar{r}^2} + O(\bar{r}^{-4}) \Rightarrow V = O(\bar{r}^{-2}) \\
e^{4\alpha} &= 1 + \frac{(a^2-b^2 - 4\mathfrak{m})\cos 2\theta}{\bar{r}^2} + O(\bar{r}^{-4}) \Rightarrow \alpha = O(\bar{r}^{-2}) \\
\sinh W & = \frac{a b \kappa \sin 2\theta}{2 \bar{r}^4} \Rightarrow W = O(\bar{r}^{-4}).
\end{aligned}
\end{equation} 

 \section*{Declarations} 
\subsection*{Funding}  A. Alaee acknowledge the support of an AMS-Simons travel grant. H Kunduri acknowledges support from NSERC Discovery Grant RGPIN-2018-04887. 
\subsection*{Competing Interests} The authors have no financial or proprietary interests in any material discussed in this article.
\subsection*{Authors' contribution statements}  All authors contributed equally  to the research presented above.

\end{document}